\newif\ifElsev
\DeclareMathAlphabet{\mathpzc}{OT1}{pzc}{m}{it}
\newproof{proof}{Proof} 
\newenvironment{proof}[1]{\begin{quotation}\noindent\textsf{Proof:} #1}%
{\(\Box\)\end{quotation}}
\newtheorem{prop}{Proposition}
\newtheorem{corollary}{Corollary}
\newtheorem{conj}{Conjecture}
\newcommand{\ie}{i.e.,~}
\newcommand{\T}{\mathcal{T}}
\newcommand{\F}{\mathcal{F}}
\newcommand{\G}{\mathcal{G}}
\newcommand{\Od}{\mathcal{O}d}
\newcommand{\Sod}{\mathcal{S}od}
\newcommand{\Tod}{\mathcal{T}od}
\newcommand{\Ev}{\mathcal{E}v}
\newcommand{\Sev}{\mathcal{S}ev}
\newcommand{\Fev}{\mathcal{F}ev}
\newcommand{\Fevb}{\overline{\mathcal{F}ev}}
\newcommand{\Fod}{\mathcal{F}od}
\newcommand{\Fodb}{\overline{\mathcal{F}od}}
\newcommand{\SFev}{\mathcal{SF}ev}
\newcommand{\SFevb}{\overline{\mathcal{SF}ev}}
\newcommand{\SFod}{\mathcal{SF}od}
\newcommand{\SFodb}{\overline{\mathcal{SF}od}}
\newcommand{\phib}[1]{\overline{`v}_{#1}}
\newcommand{\psib}[1]{\overline{`j}_{#1}}
\newcommand{\Uu}{U_1}
\newcommand{\Var}[1]{\underline{\mathsf{#1}}}
\definecolor{darkbrown}{cmyk}{.3,.75,.75,.15}
\definecolor{vertfonce}{rgb}{0,.5,0}
\definecolor{vertfonce}{rgb}{0,.5,0}
\title{On counting untyped lambda terms}
\ifElsev \author[ens]{Pierre Lescanne\corref{cor1}}
\ead{Pierre.Lescanne@ens-lyon.fr}
\address[ens]{ENS de Lyon, University of Lyon, LIP (UMR 5668 CNRS ENS Lyon UCBL INRIA) \\
 46 all\'ee d'Italie, 69364 Lyon, France}
\author{Pierre Lescanne\\
University of Lyon, ENS de Lyon, \\
 46 all\'ee d'Italie, 69364 Lyon, France}
\begin{document}

\ifElsev \else 
\date{}
\maketitle 
\pagestyle{plain}
\thispagestyle{empty}
 \fi

 \begin{abstract}
   We present several results on counting untyped lambda terms, \ie on telling
   how many terms belong to such or such class, according to the
   size of the terms and/or to the number of free variables.  
 \end{abstract}

\section{Introduction}
\label{sec:introduction}

This paper presents several results on counting untyped lambda terms, \ie on telling how
many terms belong to such or such class, according to the size of the terms and/or to the
number of free variables.  In addition to the inherent interest of these results from the
mathematical point of view, we expect that a knowledge on the distribution of terms will
improve the implementation of reduction and that results on asymptotic distributions of
terms will give a better insight of the lambda calculus.  For counting more easily lambda
terms we adopt de Bruijn indices that are a well-known coding of bound variables by
natural numbers.  First we give recurrence formulas for the number of terms (and of normal
forms) of size $n$ having $m$ variables.  These recurrence formulas are not familiar in
combinatorics and not amenable to a classical treatment by generating functions.  In a
first part of the paper, we examine the formulas for terms and normal forms when $n$ is
fixed and $m$ varies, which are polynomials. We give the expressions of the first
coefficients of those polynomials since an expression for the generic coefficients seems
out of reach and no regularity appears.  However this shows that these expressions are
clearly connected to Catalan numbers. If we would find an explicit expression for the
last coefficients, this would be an explicit expression for the closed terms.  In the
second part, we give formulas for the generating functions showing the difficulty of a
mathematical treatment.  The results presented here are a milestone in describing
probabilistic properties of lambda terms with answers to questions like: How does a random
lambda term looks like? How does a random normal form look like?  How to generate a random
lambda term (a random normal form)?

\subsection*{Related works}

Previous works on counting lambda terms were by O. Bodini et
al.~\cite{bodini11:_lambd_bound_unary_heigh}, R.~David et
al.~\cite{DBLP:journals/corr/abs-0903-5505} and J.~Wang \cite{Wang05generatingrandom}.
Related works are on counting types and/or counting
tautologies~\cite{DBLP:journals/tcs/Zaionc06,DBLP:conf/csl/FournierGGZ07,%
  DBLP:journals/aml/DavidZ09,%
  DBLP:journals/mlq/FournierGGZ10}. Complexity of rewriting was studied by Choppy et
al.~\cite{ChristineChoppyKS-TCS89}.

\section{Untyped lambda terms with de Bruijn indices}
\label{sec:lambda-terms-with}

\hfill\parbox{10cm}{\it I am dedicating this book to N.~G.~ ``Dick'' de Bruijn, because his
  influence can be felt on every page. Ever since the 1960s he has been my chief mentor,
  the main person who would answer my question when I was stuck on a problem that I had not
  been taught how to solve.
}

\medskip

\rightline{Donald Knuth in preface of~\cite{knuth00:_selec_paper_analy_algor}}
\medskip

The $`l$-calculus~\cite{HBarendregt84} is a logic formalism to describe fonctions, for
instance, the function $f "|->" (x "|->" f (f (x))$, which takes a fonction $f$ and
applies it twice.  For historical reason, this function is written $`l\,f . `l\,x. f (f
x)$, which contains the two variables $f$ and $x$, bounded by $`l$.

In this paper we represent terms by de Bruijn indices~\cite{NGDeBruijn108}, this means that
variables are represented by numbers $\Var{1}, \Var{2}, ..., \Var{m}, ...$, where an
index, for instance~$\Var{k}$, is the number of $`l$'s, above the location of the index and
below the~$`l$ that binds the variable, in a representation of $`l$-terms by trees.  For
instance, the term with variables $`l x. `l y. x\,y$ is represented by the term with de
Bruijn indices $`l `l \Var{2} \Var{1}$. The variable $x$ is bound by the top $`l$.  Above
the occurrence of~$x$, there are two $`l$'s, therefore $x$ is represented by $\Var{2}$ and
from the occurrence of~$y$, we count just the~$`l$ that binds $y$; so $y$ is represented
by $\Var{1}$.  In what follows we will call \emph{terms}, the untyped
terms\footnote{Roughly speaking, typed terms are terms consistent with properties of the domain and
  the codomain of the function they represent.} with de Bruijn
indices.  Let us call $\T_{n,m}$, the set of terms of size $n$, with $m$ de Bruijn
indices, \ie with indices in $\mathcal{I}(m) = \{\Var{1}, \Var{2}, ..., \Var{m}\}$.
A~term in $\T_{n,m}$, is either a de Bruijn index or an abstraction on a term with $m+1$
indices \ie a term in $\T_{n,m+1}$ or an application of a term in $\T_{n,m}$ on a term in
$\T_{n,m}$.  We can write, using $@$ as the application symbol,
\[\T_{n+1,m} = `l \T_{n,m+1} \uplus \biguplus_{k=0}^n\T_{n-k,m} @ \T_{k,m} .\]
Moreover terms of size $1$ are only made of de Bruijn
indices,  therefore 
\begin{eqnarray*}
  \T_{1,m} &=& \mathcal{I}(m).
\end{eqnarray*}
From this we get:
\begin{eqnarray*}
  T_{n+1,m} &=&  T_{n,m+1} + \sum_{k=0}^{n} T_{n-k,m}.T_{k,m} \\
  T_{1,m} & =& m\\
  T_{0,m} &=& 0
\end{eqnarray*}
$\T_{n,0}$ is the set of closed terms (terms with no non bound indices) of size $n$.  Notice that  
\begin{eqnarray*}
  T_{n+1,m} &=&  T_{n,m+1} + \sum_{k=1}^{n-1} T_{n-k,m}.T_{k,m} 
\end{eqnarray*}
Let us illustrate this result by the array of closed terms up to size $5$:

\begin{footnotesize}
  \begin{displaymath}
    \begin{array}{l | l | l}
      \hline
      n & \textrm{terms} & T_{n,0}\\
      \hline
      1& \textrm{non}e & 0 \\
      2& `l \Var{1} & 1 \\
      3& `l `l \Var{1},  `l `l \Var{2}, & 2 \\
      4&  `l `l  `l\Var{1},  `l `l `l\Var{2}, `l `l `l\Var{3}, `l (\Var{1} . \Var{1})& 4 \\
      5&  `l`l `l  `l\Var{1},  `l`l `l  `l\Var{2},  `l`l `l  `l\Var{3},  `l`l `l  `l\Var{4},
      `l`l(\Var{1}.\Var{1}),`l`l(\Var{1}.\Var{2}),`l`l(\Var{2}.\Var{1}),`l`l(\Var{2}.\Var{2}), &13
      \\
      & \qquad \qquad `l (\Var{1} .`l\Var{1}),`l
      (\Var{1} .`l\Var{2}),  `l ((`l\Var{1}) .\Var{1}), `l ((`l\Var{2}) .\Var{1}),`l \Var{1} . `l \Var{1}    & \\ 
          \hline
    \end{array}
  \end{displaymath}
\end{footnotesize}
The equation that defines $T_{n,m}$ allows us to compute it, since it relies on entities
$T_{i,j}$ where either $i<n$ or $j<m$.  Figure~\ref{fig:T} is a table of the first values
of $T_{n,m}$ up to $T_{18,7}$.  We are mostly interested by the sequence of sizes of the
closed terms, namely $T_{n,0}$, in other words the first column of the table.  The values
of $T_{n,0}$ correspond to sequence \textbf{A135501} (see
\url{http://www.research.att.com/~nudges/sequences/A135501}) due to Christophe Raffalli,
which is defined as the \emph{number of closed lambda-terms of size $n$}.  His recurrence formula for those
numbers is more complex.  He considers the values of the double sequence $f_{n,m}$.
$T_{n,m}$ and $f_{n,m}$ coincide for $m=0$, \ie $T_{n,0} = f_{n,0}$.
\begin{eqnarray*}
f_{1,1} &=& 1\\ 
f_{0,m} &=& 0\\  
f_{n,m} &=& 0 \textrm{~ if}~ m>2n-1\\
f_{n,m} &= &f_{n-1,m} + f_{n-1,m+1} + \sum_{p=1}^{n-2} \sum_{c=0}^{m} \sum_{l=0}^{m - c} {c \choose m} {l \choose m-c} f_{p,l+c} f_{n-p-1,m-l}.
\end{eqnarray*}

He adds
\begin{sf}
  \begin{quote}
    The last term is for the application where $c$ is the number of common variables in
    both subterms.  $f_{n,m}$ can be computed only using $f_{n',m'}$ with $n' < n$ and $m'
    \le m + n - n'$.
  \end{quote}
\end{sf}

Notice that he deals only with sequence $T_{n,0}$, whereas we consider the values for any
value of $m$, from which we can expect to extract interesting informations.  The main
interesting statement we can draw from this is that considering lambda terms with explicit
variables or considering lambda terms with de Bruijn indices makes no difference, at least
when no $`b$-reduction is taken into account.  We
feel that considering lambda terms with de Bruijn indices makes the task easier and
produces more results.

\section{Bounding the $T_{n,0}$'s}
\label{sec:bounding-t_0-m}

We can show that the $T_{n,0}$'s are bound by the Motzkin numbers.
More precisely we get the following proposition.
\begin{prop} 
If $M_n$ are the Motzkin numbers,
  \(M_n < T_{n+1,0}.\)
\end{prop}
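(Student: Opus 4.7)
The plan is to prove $M_n < T_{n+1,0}$ by strong induction on $n$, pitting the Motzkin recurrence $M_n = M_{n-1} + \sum_{k=0}^{n-2} M_k\,M_{n-2-k}$ (with $M_0 = 1$) against the recurrence $T_{n+1,0} = T_{n,1} + \sum_{k=1}^{n-1} T_{k,0}\,T_{n-k,0}$ established in the previous section, where the sum has been simplified by cancelling the vanishing contributions $T_{0,0} = T_{1,0} = 0$. After reindexing $k \mapsto k+1$ in the Motzkin convolution, both sums range over pairs $(k,k')$ with $k+k' = n$ and $k,k' \ge 1$, so they can be compared term by term.

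I would first verify the base cases directly from the tabulated values of $T_{n,0}$ (the inequality is tight for $n \le 3$ and first becomes strict at $n = 4$). For the inductive step, assume $M_j < T_{j+1,0}$ for all $j < n$. The leading term $M_{n-1}$ of the Motzkin recurrence is bounded by $T_{n,0}$ via the hypothesis, hence by $T_{n,1}$ via the natural inclusion $\T_{n,0} \subseteq \T_{n,1}$: every closed term of size $n$ is also a term of size $n$ that needs at most one additional enclosing $\lambda$. The interior of the reindexed Motzkin convolution, where $k, k' \ge 2$, is bounded termwise by the hypothesis $M_{k-1} \le T_{k,0}$, and these bounds multiply.

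The main obstacle is the boundary of the convolution: the terms $M_0\,M_{n-2}$ and $M_{n-2}\,M_0$ contribute an excess of $2\,M_{n-2}$ that the termwise bound misses, because $T_{1,0} = 0$ cannot dominate $M_0 = 1$. Compensating this loss is the delicate point; the natural source of slack is $T_{n,1} - T_{n,0}$, which counts terms of size $n$ containing at least one free occurrence of $\Var{1}$. Many such terms exist: for every closed term $t$ of size $n-2$, both $\Var{1}\,@\,t$ and $t\,@\,\Var{1}$ lie in $\T_{n,1} \setminus \T_{n,0}$, and $\lambda$-wrapped variants with a dangling $\Var{2}$ inside contribute further. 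Packaging these injections together (or carrying along a secondary induction comparing $M_n$ with $T_{n,1}$) produces the estimate $2\,M_{n-2} \le T_{n,1} - T_{n,0}$ needed to absorb the boundary loss; once this is in place, strict inequality $M_n < T_{n+1,0}$ follows from the already-strict inductive gap in $M_{n-1} < T_{n,0}$.
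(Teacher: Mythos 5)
Your strategy---induct on the two recurrences $M_n = M_{n-1} + \sum_{k=0}^{n-2}M_kM_{n-2-k}$ and $T_{n+1,0} = T_{n,1} + \sum_{k=1}^{n-1}T_{k,0}\,T_{n-k,0}$ and compare them term by term---is genuinely different from the paper's, and you correctly locate the one place where the termwise comparison fails: the boundary of the Motzkin convolution contributes $2M_{n-2}$ while the matching terms $T_{1,0}T_{n-1,0}$ contribute $0$. But the step meant to repair this is exactly the step you do not carry out. The injections you exhibit ($t\mapsto \Var{1}\,@\,t$ and $t\mapsto t\,@\,\Var{1}$ for $t$ closed of size $n-2$) prove only $T_{n,1}-T_{n,0}\ge 2\,T_{n-2,0}$, and to deduce the needed $2M_{n-2}\le T_{n,1}-T_{n,0}$ you would have to know $M_{n-2}\le T_{n-2,0}$. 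That is not available from your induction hypothesis, which gives only the weaker $M_{n-2}<T_{n-1,0}$; worse, $M_j\le T_{j,0}$ is \emph{false} for small $j$ (e.g.\ $M_4=9$ while $T_{4,0}=4$, $M_5=21$ while $T_{5,0}=13$), so it cannot simply be appended to the hypothesis without its own delicate analysis. The parenthetical ``secondary induction comparing $M_n$ with $T_{n,1}$'' is the right kind of fix, but stating that auxiliary invariant, finding where it first holds, and checking that the two inductions interlock is the entire difficulty, and none of it is done. As written this is a plan with a hole at its critical point. (A smaller issue you half-acknowledge: $M_n<T_{n+1,0}$ is an equality for $n=1,2,3$ and reversed at $n=0$, so the strict claim only holds from $n=4$ on; a correct write-up must say so.)

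For contrast, the paper's own proof avoids recurrences entirely: read a unary--binary tree as a lambda term by turning unary nodes into $\lambda$'s, binary nodes into applications, and leaves into $\Var{1}$, then prefix one more $\lambda$ to close it. Since each node contributes $1$ to the size and Motzkin numbers count unary--binary trees, this single injection into the closed terms gives the bound in two lines (modulo the same small-$n$ and indexing caveats, which the paper itself glosses over). If you want to salvage your inductive route, the cleanest repair is to first prove the injective statement $M_{n-1}\le T_{n,1}$ and use it to absorb the boundary deficit---but at that point you are reproving the paper's lemma by harder means.
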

\begin{proof}{}
  There is a one-to-one correspondance between unary-binary trees and lambda terms of the
  form $`l M$ in which all the indices are~$\Var{1}$.  Hence the results, since Motzkin
  numbers count unary-binary trees.
\end{proof}
We conclude that the asymptotic behavior of the $T_{n,0}$'s are at least $3^n$ since the
Motzkin numbers are asymptotically equivalent to $\sqrt{\frac{3}{4 \pi n^3}}\;3^n$
(\cite{flajolet08:_analy_combin}, Example VI.3).  

\section{The functions $m"|->" T_{n,m}$}
\label{sec:m-Tnm}

Due to properties of the generating function (see Section~\ref{sec:gener-funct-terms}) we
are not able to give a simple expression for the function $n"|->" T_{n,m}$, so we focus on
the function $m "|->" T_{n,m}$.  These functions are
polynomials $P^T_n$, defined recursively as follows:
\pagebreak[2]
\begin{eqnarray}
  P^T_0 (m) &=& 0 \label{eq:a}\\
  P^T_1(m) &=& m\\
  P^T_{n+1}(m) &=& P^T_{n}(m+1) + \sum_{k=1}^{n-1} P^T_{k}(m)\, P^T_{n-k}(m). \label{eq:c}
\end{eqnarray}
See Figure~\ref{fig:Pn} for the first $18$ polynomials.  
 The table below gives
the coefficients of the polynomials $P^T_{n}$ up to~$13$.

\begin{tiny}
  \begin{displaymath}
    \begin{array}{r|r| r | r | r | r| r | r| r | r}
      n\backslash m^i & m^8 & m^7 & m^6 & m^5 & m^4 & m^3 & m^2 & m & 1\\
      \hline\hline
      1 &&&&&&&& 1 & 0\\
      \hline
     \mathit{ 2} &&&&&&&& \mathit{1} & \mathit{1} \\
      \hline
      3 && &&&&&1 & 1 &2 \\
      \hline
      \mathit{4} &&&&&&&\mathit{3} &\mathit{5} &\mathit{4}  \\
      \hline
      5 && &&&&2& 6&17&13\\
      \hline
      \mathit{6} && &&&&\mathit{10} & \mathit{26}&\mathit{49} & \mathit{42}\\
      \hline
      7 &&&&& 5 & 30 & 111&179&139\\
      \hline
      \mathit{8}  &&&&& \mathit{35}  & \mathit{134} & \mathit{405}&\mathit{683} & \mathit{506}\\
      \hline
      9 &&&& 14 &140 &652 &1658 &2629 &1915\\
      \hline
      \mathit{10} &&&& \mathit{126} &\mathit{676} &\mathit{2812} &\mathit{7122} &\mathit{10725} &\mathit{7558}\\
      \hline
      11&&& 42 &630 &3610 &12760 &30783 &45195 &31092\\
      \hline
      \mathit{12} &&& \mathit{462} &\mathit{3334} &\mathit{17670} &\mathit{60240} &\mathit{138033} &\mathit{196355} &\mathit{132170}\\
      \hline
      13 && 132 &2772 &19218 &87850 &285982 &635178 &880379 &580466\\
      \hline
      \mathit{14} && \mathit{1716} &\mathit{16108} &\mathit{104034} &\mathit{449290} &\mathit{1390246} &\mathit{2991438} &\mathit{4052459} & \mathit{2624545}\\ 
      \hline
      15 & 429 &12012 &99386 &560854 &2308173&6895122 &14436365 &19144575 &12190623\\
      \hline
      \mathit{16} & \mathit{6435} &\mathit{76444}&\mathit{584878}&\mathit{3076878} &\mathit{12039895}&\mathit{34815210} &\mathit{71170791}&\mathit{92631835} &\mathit{58083923}\\
      \hline
    \end{array}
  \end{displaymath}
\end{tiny}
The degrees of those polynomials
increase two by two and we can describe their leading coefficients,  their second leading
coefficients and the third leading coefficients of the odd polynomials.

\begin{prop}\label{prop:deg}
$deg(P^T_{2p-1}) = deg(P^T_{2p}) = p.$
\end{prop}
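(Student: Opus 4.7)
The plan is to prove this by induction on $p$, using the fact that the recurrence (\ref{eq:a})--(\ref{eq:c}) has only non-negative coefficients and non-negative starting values. It follows immediately that every $P^T_n$ has non-negative integer coefficients, so no cancellation can occur when adding $P^T_n(m+1)$ to the sum $\sum_{k=1}^{n-1} P^T_k(m)\,P^T_{n-k}(m)$. Consequently,
\[
  \deg(P^T_{n+1}) \;=\; \max\!\left(\deg(P^T_n),\; \max_{1\le k\le n-1}\bigl(\deg(P^T_k)+\deg(P^T_{n-k})\bigr)\right),
\]
which reduces the problem to a purely combinatorial computation on degrees.

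For the base case, $P^T_1(m)=m$ gives $\deg(P^T_1)=1$, and since the sum in (\ref{eq:c}) is empty for $n=1$, one gets $P^T_2(m)=P^T_1(m+1)=m+1$, so $\deg(P^T_2)=1$. This matches the formula at $p=1$.

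For the inductive step, assume $\deg(P^T_{2q-1})=\deg(P^T_{2q})=q$ for every $q\le p$, and compute $\deg(P^T_{2p+1})$. The term $P^T_{2p}(m+1)$ contributes degree $p$. In the sum $\sum_{k=1}^{2p-1} P^T_k(m)\,P^T_{2p-k}(m)$, the indices $k$ and $2p-k$ have the same parity since their sum $2p$ is even. If $k=2j-1$ is odd, then $2p-k=2(p-j)+1$ is also odd, and the degree of the product is $j+(p-j+1)=p+1$. If $k=2j$ is even, then $2p-k$ is even, giving degree $j+(p-j)=p$. The maximum is $p+1$, achieved by an odd index, hence $\deg(P^T_{2p+1})=p+1$.

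Next, for $\deg(P^T_{2p+2})$ the first summand $P^T_{2p+1}(m+1)$ already has degree $p+1$. In the sum $\sum_{k=1}^{2p} P^T_k(m)\,P^T_{2p+1-k}(m)$, the indices satisfy $k+(2p+1-k)=2p+1$, so they have opposite parities; a routine case split (one odd, one even) shows that every product has degree exactly $p+1$. Therefore $\deg(P^T_{2p+2})=p+1$, closing the induction. The argument is essentially a bookkeeping exercise; the only mild subtlety is noticing that the parity of $n$ controls whether the maximum in the convolution sum is attained by (odd, odd) pairs or distributed uniformly across (odd, even) pairs, which is what forces the degrees to increase by one every two steps rather than every step.
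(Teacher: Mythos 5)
Your proof is correct and follows essentially the same route as the paper's: induction on $p$, using positivity of the coefficients to rule out cancellation, followed by a parity case analysis on the indices in the convolution sum. If anything, your version is slightly more careful than the paper's, since you state the $\max$ identity for degrees explicitly and note that for $P^T_{2p+2}$ the summand $P^T_{2p+1}(m+1)$ already attains the top degree $p+1$, a point the paper glosses over when it asserts that $P^T_n(m+1)$ never contributes to the leading term.
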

\begin{proof}{}
  This is true for $P^T_1 = m$ and $P^T_2 = m+1$ which have degree~$1$.  Assume the
  property true up to $p$.  Note that all the coefficients of the $P_n^T$'s are positive. In
\[P^T_{n}(m+1) + \sum_{k=1}^{n-1} P^T_{k}\, P^T_{n-k},\] the degree of $P^T_{n+1}(m)$ comes
from the $ P^T_{k}\, P^T_{n-k}$'s.  Indeed, par induction the degree of $P_n^T(m+1)$ is
$n\div 2$ which is less than $(n+1)\div 2$, therefore we can consider that $P_n^T(m+1)$
does not contribute to the degree of $P^T_{n+1}(m)$. 
Consider the degree of $ P^T_{k}\, P^T_{n-k}$
according to the parity of $n$ and $k$.
\begin{description}
\item[$n=2p+1$ and $k=2h-1$.]  In this case, $p\ge h\ge 1$ and the degree of $P^T_{2h-1}$ is $h$ and the
  degree of $P^T_{2p+1-2h +1}$ is $p-h+1$, hence the degree of $ P^T_{2h-1}\,
  P^T_{2p+1-2h+1}$ is $p+1$.
\item[$n=2p+1$ and $k=2h$.] In this case, $p\ge h \ge 1$ and the degree of $P^T_{2h}$ is $h$ and the
  degree of $P^T_{2p+1-2h}$ is $p-h+1$, hence the degree of $ P^T_{2h}\,
  P^T_{2p+1-2h}$ is $p+1$.
\item[$n=2p$ and $k=2h-1$.] In this case, $p+1\ge h \ge 1$ and the degree of
  $P^T_{2h-1}$ is $h$ and the degree of $P^T_{2p-2h +1}$ is $p-h+1$, hence the degree
  of $ P^T_{2h-1}\, P^T_{2p+2-2h+1}$ is $p+1$.
\item[$n=2p$ and $k=2h$.] In this case, $p+1\ge h \ge 1$ and the degree of
  $P^T_{2h}$ is $h$ and the degree of $P^T_{2p-2h}$ is $p-h$, hence the degree
  of $ P^T_{2h}\, P^T_{2p-2h}$ is $p$.  These products $ P^T_{2h}\, P^T_{2p-2h}$  do not
  contribute to the degree of $ P^T_{2p+1}$.
\end{description}

\end{proof}
In what follows, for short, we write $`q_{2q+1}$ and $`q_{2q}$ the leading coefficients of
$P^T_{2q+1}$ and $P^T_{2q}$, $`t_{2q+1}$ and $`t_{2q}$ the second leading coefficients of
$P^T_{2q+1}$ and $P^T_{2q}$, and $`d_{2q+1}$ the third leading coefficients
of$P^T_{2q+1}$.  We also write, as usual, $C_n$ the $n^{th}$ Catalan number.

We define five generating functions.
\begin{displaymath}
  \Od(z) = \sum_{i=0}^{\infty} `q_{2i+1} z^i \qquad
  \Ev(z) = \sum_{i=0}^{\infty} `q_{2i} z^i
 \end{displaymath}
  \begin{displaymath}
   \Sod(z) = \sum_{i=0}^{\infty} `t_{2i+1}z^i  \qquad
    \Sev(z) = \sum_{i=0}^{\infty}`t_{2i}z^i
  \end{displaymath}
  \begin{displaymath}
\Tod(z) = \sum_{i=0}^{\infty} `d_{2i+1}z^i.
\end{displaymath}

\begin{prop}\label{prop:lead_co_P_odd}
  The leading coefficients of $P^T_{2q+1}$ are $\frac{1}{q+1}{2q \choose q}$, \ie the
  Catalan numbers $C_q$.
\end{prop}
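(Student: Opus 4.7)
The plan is to prove this by strong induction on $q$, using the recurrence \eqref{eq:c} together with the degree analysis already carried out in Proposition~\ref{prop:deg}. The base case is $P^T_1(m) = m$, whose leading coefficient is $1 = C_0$.

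For the inductive step, I would apply the recurrence with $n = 2q$ to obtain
\[
P^T_{2q+1}(m) \;=\; P^T_{2q}(m+1) \;+\; \sum_{k=1}^{2q-1} P^T_k(m)\,P^T_{2q-k}(m).
\]
By Proposition~\ref{prop:deg}, $\deg P^T_{2q}(m+1) = q < q+1 = \deg P^T_{2q+1}$, so this summand contributes nothing to the leading coefficient. In the sum, splitting $k$ by parity as in the proof of Proposition~\ref{prop:deg} shows that the only pairs $(P^T_k, P^T_{2q-k})$ whose product reaches degree $q+1$ are those with $k = 2h-1$ odd for $h = 1,\dots,q$ (the even–even products $P^T_{2h}P^T_{2q-2h}$ have total degree only $q$). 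Reading off the leading coefficients gives
\[
\theta_{2q+1} \;=\; \sum_{h=1}^{q}\theta_{2h-1}\,\theta_{2(q-h)+1}.
\]

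At this point I would invoke the induction hypothesis: $\theta_{2h-1} = C_{h-1}$ for $h = 1, \dots, q$, and similarly $\theta_{2(q-h)+1} = C_{q-h}$. Substituting and reindexing with $j = h-1$ yields
\[
\theta_{2q+1} \;=\; \sum_{j=0}^{q-1} C_j\,C_{q-1-j},
\]
which is exactly the classical Segner recurrence satisfied by the Catalan numbers. Hence $\theta_{2q+1} = C_q = \tfrac{1}{q+1}\binom{2q}{q}$, completing the induction.

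The main obstacle is really just the bookkeeping on parities to be sure that (i) no lower-degree summand, in particular $P^T_{2q}(m+1)$, contributes spurious mass to the coefficient of $m^{q+1}$, and (ii) every odd–odd pair $(P^T_{2h-1}, P^T_{2(q-h)+1})$ is counted exactly once when one sums over $k = 1, \dots, 2q-1$. Both points are already justified by the case analysis in Proposition~\ref{prop:deg}, so the argument essentially reduces to recognising the Catalan convolution.
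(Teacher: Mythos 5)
Your proof is correct and takes essentially the same route as the paper: both use the degree analysis of Proposition~\ref{prop:deg} to show that only the odd--odd products contribute to the top coefficient, yielding the convolution recurrence $\theta_{2q+1}=\sum_{h=0}^{q-1}\theta_{2h+1}\,\theta_{2q-2h-1}$ with $\theta_1=1$. The only (inessential) difference is the last step: the paper identifies this via the functional equation $\Od(z)=1+z\,\Od(z)^2$ for the generating function, whereas you recognise it directly as Segner's recurrence and conclude by induction.
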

\begin{proof}{}
  From Equation~(\ref{eq:c}) and the last two steps of the proof of
  Proposition~\ref{prop:deg}, we deduce the following relation :
\begin{eqnarray*}
     `q_{2q+1} &=& \sum_{h=0}^{q-1} `q_{2h+1}\, `q_{2q- 2h-1} \qquad \textrm{for~} q\ge 1\\
     `q_1 &=&1.
\end{eqnarray*}
which says that the leading coefficient of an odd  polynomial comes only from the leading
coefficients in the products of odd polynomials.
We get:
\begin{displaymath}
  \Od(z) = 1 + z\,\Od(z)^2.
\end{displaymath}
which shows that
\begin{displaymath}
\Od(z) = \frac{1-\sqrt{1-4z}}{2z}.
\end{displaymath}
and $\Od(z) = C(z)$, the generating function of the Catalan numbers.
\end{proof}
\begin{prop}
  The leading coefficients of $P^T_{2q}$ are ${ 2q-1 \choose q}$, for $q\ge 1$.
\end{prop}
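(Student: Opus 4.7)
The plan is to follow the blueprint of Proposition~\ref{prop:lead_co_P_odd}: derive a recurrence for the leading coefficients $`q_{2q}$ from equation~(\ref{eq:c}), translate it into a functional equation for $\Ev(z)$, and solve in closed form. The target value $\binom{2q-1}{q} = \tfrac{1}{2}\binom{2q}{q}$ already suggests that $\Ev(z)$ should be essentially $(1-4z)^{-1/2}$, which gives a clear shape to aim at.

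Setting $n+1 = 2q$ in~(\ref{eq:c}) and collecting the coefficient of $m^q$, the shifted term $P^T_{2q-1}(m+1)$ contributes $C_{q-1}$ (its leading coefficient is unchanged by $m \mapsto m+1$). In the convolution, the case analysis for $n = 2p+1$ in the proof of Proposition~\ref{prop:deg} tells me that the only products $P^T_k\,P^T_{2q-1-k}$ reaching degree $q$ are the odd-even pairs; writing $k = 2h-1$ and $k = 2h$ with $1 \le h \le q-1$ and reindexing one of the sums, the two families produce identical contributions. This yields
\begin{displaymath}
`q_{2q} \;=\; C_{q-1} \;+\; 2 \sum_{h=1}^{q-1} C_{q-h-1}\, `q_{2h}, \qquad q \ge 1,
\end{displaymath}
with $`q_2 = 1$ coming from the $q = 1$ case (empty sum).

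Next I multiply by $z^q$, sum over $q \ge 1$, and use $\Od(z) = C(z)$ (Proposition~\ref{prop:lead_co_P_odd}) to turn the convolution into a product. This gives $\Ev(z) = z\,\Od(z) + 2 z\,\Od(z)\,\Ev(z)$, hence
\begin{displaymath}
\Ev(z) \;=\; \frac{z\,\Od(z)}{1 - 2 z\,\Od(z)} \;=\; \frac{1}{2}\left(\frac{1}{\sqrt{1-4z}} - 1\right),
\end{displaymath}
after substituting $2 z\,\Od(z) = 1 - \sqrt{1-4z}$. Reading off coefficients via $[z^q]\,(1-4z)^{-1/2} = \binom{2q}{q}$ then finishes the proof.

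The only delicate step is the recurrence: one must recognise the factor $2$ as the symmetry between odd-even and even-odd pairings in the convolution (the first three cases in the proof of Proposition~\ref{prop:deg}), while the fourth case (two even factors) contributes only to lower-order terms. Everything downstream is a routine manipulation of the Catalan generating function.
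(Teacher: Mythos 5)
Your proof is correct and follows essentially the same route as the paper: your recurrence $`q_{2q} = C_{q-1} + 2\sum_{h=1}^{q-1} C_{q-h-1}\,`q_{2h}$ is the paper's relation $`q_{2q+2} = `q_{2q+1} + 2\sum_{h} `q_{2h}\,`q_{2q-2h+1}$ with the odd leading coefficients already replaced by Catalan numbers, and both arguments then pass to the same functional equation $\Ev(z) = z\,\Od(z) + 2z\,\Od(z)\Ev(z)$ and the same closed form $\frac{1}{2}\left((1-4z)^{-1/2}-1\right)$, whose coefficients are $\frac{1}{2}{2q \choose q} = {2q-1 \choose q}$. One cosmetic remark: since the convolution index $2q-1$ is odd, every product in the sum is an odd--even pair, so the ``two even factors'' case of Proposition~\ref{prop:deg} simply does not occur here, rather than contributing only to lower-order terms.
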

\begin{proof}{}
Without lost of generality, we  assume that $`q_0=0$.
From Equation~(\ref{eq:c}), we get, for $q\ge 1$,
  \begin{eqnarray*}
    `q_{2q+2} &=& `q_{2q+1} + \sum_{k=0}^{2q+1} `q_{k}\, `q_{2q+1-k} \\
    &=& `q_{2q+1} + \sum_{h=0}^{q}`q_{2h} \,`q_{2q+1-2h} + \sum_{h=0}^{q} `q_{2h+1}\, `q_{2q - 2h} \\
    &=& `q_{2q+1} + 2\,\sum_{h=0}^{q} `q_{2h}\, `q_{2q - 2h+1}.
   \end{eqnarray*}
which says that the leading coefficient of an even polynomial comes from the leading
coefficient of the preceding odd polynomial and of the products of the leading coefficients of
the products of the smaller polynomials. 
We get:
\begin{displaymath}
  \Ev(z) = z\Od(z) + 2\,z \Ev(z)\Od(z),
\end{displaymath}
hence
\begin{displaymath}
  \Ev(z) = \frac{z \Od(z)}{1-2z \Od(z)} = \frac{1-\sqrt{1-4z}}{2\sqrt{1-4z}} =
  \frac{\sqrt{1-4z}}{2(1-4z)} - \frac{1}{2}
\end{displaymath}
which is the generating function of  the sequence ${ 2q+1 \choose q}$.
\end{proof}
\begin{prop}
  The second leading coefficients of $P^T_{2q+1}$ are $(2q-1){ 2(q-1) \choose q-1}$.
\end{prop}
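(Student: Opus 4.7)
The plan is to follow the same strategy as in the previous two propositions: extract from the master recurrence (\ref{eq:c}) a convolution identity for $\tau_{2q+1}$, encode it as a functional equation on $\Sod(z)$, and solve in closed form. Specifically, I would substitute $n = 2q$ in (\ref{eq:c}) and read off the coefficient of $m^q$ on both sides. On the right, $P^T_{2q}(m+1)$ contributes only $\theta_{2q}$, because it has degree $q$ and the shift $m \mapsto m+1$ preserves the leading coefficient. The products $P^T_k\,P^T_{2q-k}$ split by parity of $k$ exactly as in the proof of Proposition~\ref{prop:deg}: the odd-odd products $P^T_{2j-1}\,P^T_{2q-2j+1}$ have degree $q+1$ and thus contribute to $m^q$ through the cross-terms $\tau_{2j-1}\theta_{2q-2j+1} + \theta_{2j-1}\tau_{2q-2j+1}$; the even-even products $P^T_{2j}\,P^T_{2q-2j}$ have degree exactly $q$ and contribute $\theta_{2j}\theta_{2q-2j}$ through their leading coefficient; no other products reach degree~$q$. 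Using the symmetry $j \leftrightarrow q+1-j$ to collapse the two cross-terms, this produces
\[
\tau_{2q+1} \;=\; \theta_{2q} \;+\; 2\sum_{j=1}^{q} \tau_{2j-1}\theta_{2q-2j+1} \;+\; \sum_{j=1}^{q-1} \theta_{2j}\theta_{2q-2j},
\]
valid for $q \ge 1$, with the boundary value $\tau_1 = 0$.

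Passing to generating functions (and using that $\theta_0 = 0$ by convention), this becomes
\[
\Sod(z) \;=\; \Ev(z) + 2z\,\Sod(z)\,\Od(z) + \Ev(z)^2,
\]
which rearranges to $\Sod(z)\,(1 - 2z\,\Od(z)) = \Ev(z)(1 + \Ev(z))$. I would then plug in the closed forms for $\Od(z)$ and $\Ev(z)$ obtained in the two previous propositions. A short rationalization shows that $1 - 2z\,\Od(z) = \sqrt{1-4z}$ and that $\Ev(z)(1+\Ev(z)) = \tfrac{z}{1-4z}$, so $\Sod(z) = z(1-4z)^{-3/2}$. The standard Newton expansion $(1-4z)^{-3/2} = \sum_{k\ge 0} (2k+1)\binom{2k}{k} z^k$ finally yields $\tau_{2q+1} = (2q-1)\binom{2q-2}{q-1}$.

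The main bookkeeping hazard is the first step: making sure that both orderings of the cross-terms in the odd-odd products are captured and that the even-even products, which are one degree below the maximum and therefore invisible in the leading-coefficient computation of Proposition~\ref{prop:lead_co_P_odd}, are not forgotten at this refined level. Once the recurrence is correctly set up, the remaining manipulations are routine algebra on the known square-root generating functions.
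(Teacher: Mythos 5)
Your proof is correct and follows essentially the same route as the paper: you derive the same recurrence $\tau_{2q+1} = \theta_{2q} + \sum\theta\theta + 2\sum\tau\theta$ by isolating the coefficient of $m^q$ in the recurrence for $P^T_{2q+1}$, obtain the same functional equation $\Sod(z) = \Ev(z) + \Ev(z)^2 + 2z\,\Od(z)\Sod(z)$, and arrive at the same closed form $\Sod(z) = z(1-4z)^{-3/2}$. Your write-up is in fact somewhat more careful than the paper's, in that it justifies explicitly why the shifted term $P^T_{2q}(m+1)$ contributes only its leading coefficient and spells out the final coefficient extraction via the Newton expansion.
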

\begin{proof}{}
 From the proof of Proposition~\ref{prop:deg}, we see that the
  monomial of second highest degree of $P_{2q+1}$ is made as the sum:
  \begin{itemize}
  \item of the monomial of highest degree of $P_{2q}$,
  \item of the products of the
    monomials of highest degree from the $P_i$'s with even indices and
  \item the products of
    monomials of highest degree with monomials of second highest degree from the $P_i$'s
    with odd indices.
  \end{itemize}
We get for $q\ge 1$:
  \begin{eqnarray*}
    `t_{2q+1} &=& `q_{2q} + \sum_{h=0}^{q} `q_{2h}`q_{2q-2h} + \sum_{h=0}^{q-1} `q_{2h+1}
    `t_{2q-2h -1} + \sum_{h=0}^{q-1} `t_{2h+1} `q_{2q-2h-1}.
  \end{eqnarray*}
We notice that $`t_1=0$.  Therefore we get:
\begin{eqnarray*}
  \Sod(z) &=& \Ev(z) + \Ev(z)^2 + 2z \Od(z)\,\Sod(z).
\end{eqnarray*}
Then
\begin{displaymath}
   \Sod(z)  = \frac{\Ev(z)+\Ev(z)^2}{1-2z\Od(z)}\\
   =  \frac{z}{\sqrt{1-4z}(1-4z)} = \frac{z\sqrt{1-4z}}{(1-4z)^2}
\end{displaymath}
which is the generating function of $(2q-1){ 2(q-1) \choose q-1}$.
\end{proof}
\begin{prop}
  The second leading coefficients of $P^T_{2q}$ are $`t_0=0$, $ `t_1=1$, $`t_2= 5$ and for  $q\ge 3$,
 \begin{displaymath}
 `t_{2q} \ =\ 4^{q-1} + \frac{2 (2q - 5)  (2q - 3)  (2q -1)}{3(q-2)}  {2(q-3) \choose q-3}.
\end{displaymath}
\end{prop}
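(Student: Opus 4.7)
The plan is to mirror the two preceding propositions: derive a linear recurrence for $`t_{2q}$ by extracting the coefficient of $m^{q-1}$ in $P^T_{2q}$ from Equation~(\ref{eq:c}), translate it into a functional equation for $\Sev(z)$, solve in closed form, and read off the coefficients.

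First I would enumerate the contributions to the coefficient of $m^{q-1}$ in $P^T_{2q}$. The shift $P^T_{2q-1}(m+1)$ contributes $q\,C_{q-1} + `t_{2q-1}$, since $(m+1)^{q}$ supplies a $q m^{q-1}$ term multiplying the leading coefficient $C_{q-1}$, while $(m+1)^{q-1}$ contributes $`t_{2q-1}$ at order $m^{q-1}$. Among the products $P^T_k\,P^T_{2q-1-k}$, only those of total degree $q$ matter, and by the case analysis in the proof of Proposition~\ref{prop:deg} these are exactly the odd--even pairs; in each such pair the $m^{q-1}$ coefficient equals (leading)$\times$(second leading) $+$ (second leading)$\times$(leading). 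Folding the convolution using the symmetry $k\leftrightarrow 2q-1-k$ then yields
\[`t_{2q} \;=\; q\,C_{q-1} + `t_{2q-1} + 2\sum_{h=1}^{q-1}\bigl(`q_{2h-1}\,`t_{2q-2h} + `t_{2h-1}\,`q_{2q-2h}\bigr).\]

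Next I would translate term by term. From $z\,\Od(z) = (1-\sqrt{1-4z})/2$, one has $\sum_{q\ge 1} q\,C_{q-1}z^q = z\,\frac{d}{dz}(z\,\Od(z)) = z/\sqrt{1-4z}$. The remaining pieces become $z\,\Sod(z)$, $2z\,\Od(z)\,\Sev(z)$, and $2z\,\Sod(z)\,\Ev(z)$. Invoking the identities $1-2z\,\Od(z) = \sqrt{1-4z}$ and $1+2\,\Ev(z) = 1/\sqrt{1-4z}$, both immediate from the preceding propositions, together with $\Sod(z) = z/(1-4z)^{3/2}$, the functional equation collapses to
\[\Sev(z) \;=\; \frac{z}{1-4z} + \frac{z^{2}}{(1-4z)^{5/2}}.\]

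Finally I would extract $[z^q]\Sev(z)$: the first summand contributes $4^{q-1}$, and differentiating $(1-4z)^{-1/2} = \sum_{n\ge 0}\binom{2n}{n}z^n$ twice gives $(1-4z)^{-5/2} = \tfrac{1}{3}\sum_{n\ge 0}(2n+1)(2n+3)\binom{2n}{n}z^n$, so that the coefficient of $z^{q-2}$ in $z^{2}(1-4z)^{-5/2}$ equals $\tfrac{(2q-3)(2q-1)}{3}\binom{2q-4}{q-2}$. The elementary identity $\binom{2q-4}{q-2} = \frac{2(2q-5)}{q-2}\binom{2(q-3)}{q-3}$, valid for $q\ge 3$, rewrites this into the form announced in the proposition, and the exceptional values at $q\le 2$ are verified directly from the closed form for $\Sev(z)$. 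The delicate step is the first one: carefully identifying which products are of full degree $q$ and weighting each of the two cross terms correctly; once the recurrence is pinned down, the subsequent generating-function manipulations follow the same routine pattern as in the proofs of the previous two propositions.
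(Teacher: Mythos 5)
Your proposal is correct and follows essentially the same route as the paper: the same recurrence for the coefficient of $m^{q-1}$ (the paper writes the first term as $z\,\Od(z)+z^2\,\Od'(z)$, which equals your $z/\sqrt{1-4z}$), the same functional equation for $\Sev(z)$, and the same closed form $\frac{z}{1-4z}+\frac{z^2}{(1-4z)^2\sqrt{1-4z}}$. The only difference is cosmetic: you extract the coefficients by expanding $(1-4z)^{-5/2}$ directly, whereas the paper splits the second summand and identifies it with the OEIS sequence A029887.
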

\begin{proof}{}
From Equation~(\ref{eq:c}),we get
\begin{displaymath}
  \left\{
    \begin{array}{lcl}
      `t_{2q+2} &=& (q+1) `q_{2q+1} + `t_{2q+1} \\&&+ 2 \sum_{i=1}^q`q_{2i-1}`t_{2q-2i+2} 
      + 2 \sum_{i=1}^q`q_{2i}`t_{2q-2i+1}\\
      `t_0 &=&0
    \end{array}
\right.
\end{displaymath}
The second leading coefficient of an even polynomial $P_{2m+2}^T$ is made of four components:
\begin{itemize}
\item the coefficient of degree $q$ in $`q_{2q+1}(m+1)^{q+1}$, namely ${(q+1) `q_{2q+1}}$,
\item the coefficient of degree $q$ in $`t_{2q+1}(m+1)^q$, namely $`t_{2q+1}$,
\item the sum of the products of the leading coefficients of the odd polynomials and the
  second leading coefficients of the even polynomials (this occurs twice, once in product
  $P_{2i-1}\,P_{2q-2i+2}$ and once in product $P_{2i}\,P_{2q-2i+1}$),
\item the sum of the products of the leading coefficients of the even polynomials and the
  second leading coefficients of the odd polynomials (twice).
\end{itemize}

From the above induction, $\Sev$ fullfils the following functional equation:
\begin{displaymath}
  \Sev(z) = z \Od(z) + z^2\Od'(z) + z \Sod(z) + 2z \Od(z)\Sev(z) + 2z\Ev(z)\Sod(z).
\end{displaymath}
Therefore
\begin{eqnarray*}
  \Sev(z) &=& \frac{z \Od(z) + z^2\Od'(z) + z \Sod(z) + 2z\Ev(z)\Sod(z)}{\sqrt{1-4z}}\\
&=& \frac{(1-\sqrt{1-4z})}{2\sqrt{1-4z}} \\
&&+ \frac{z}{1-4 z}-\frac{1-\left(\sqrt{1-4 z}\right)}{2 \sqrt{1-4z}}\\
&&+ \frac{z^2}{(1-4z)^2}\\
&& + \frac{z(1-\sqrt{1-4z})}{(1-4z)^2\sqrt{1-4z}}\\
&=& \frac{z}{1-4 z} + \frac{z^2}{(1-4z)^2} +
\frac{z^2(1-\sqrt{1-4z})}{(1-4z)^2\sqrt{1-4z}}\\
&=& \sum_{q=1}^{\infty}4^{q-1} z^q + \sum_{q=2} ^{\infty}(q-1) 4^{q-2} z^q + \sum_{q=3}^{\infty} 2a_{q-3}z^q
\end{eqnarray*}
where $(a_n)_{n\in\mathbf{N}}$ is sequence \textsf{A029887} of  the \emph{On-Line Encyclopedia of
  Integer Sequences} whose value is:
\[\frac{(2n+1)(2n+3)(2n+5)}{3}C_n-(n+2)2^{2n+1}.\]
  Hence 
\begin{eqnarray*}
  \Sev(z) &=& \sum_{q=1}^{\infty}4^{q-1} z^q + \sum_{q=3}^{\infty}  \frac{2 (2q - 5)(2q-3)(2q -1)}{3}  C_{q-3}z^q\\
  &=& \frac{z}{1-4z} + \frac{z^2}{(1-4z)^2\sqrt{1-4z}}.
\end{eqnarray*}
\end{proof}
\begin{prop}
  The third leading coefficients of $P^T_{2q+1}$ are 
  \[q\;2^{2q-1} +\frac{q(q-1)(q-2)}{120}{2q\choose q}+ \frac{(q+1)q(q-1)}{120}{2(q+1)
    \choose q+1}.\]
\end{prop}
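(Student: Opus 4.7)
The plan is to mirror the strategy of the previous two propositions: obtain a recurrence for $`d_{2q+1}$ by extracting the coefficient of $m^{q-1}$ from Equation~(\ref{eq:c}) with $n=2q$, translate it into a functional equation for $\Tod(z)$, solve that equation, and identify the coefficients of the resulting rational function with the claimed closed form.

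The coefficient of $m^{q-1}$ in $P^T_{2q+1}(m)$ receives three kinds of contributions. The shift $P^T_{2q}(m+1)$ contributes $q\,`q_{2q} + `t_{2q}$ via the binomial expansions of $(m+1)^q$ and $(m+1)^{q-1}$. Each odd-odd product $P^T_{2h-1}\,P^T_{2q-2h+1}$ has degree exactly $q+1$, so it contributes its full third leading coefficient $`q_{2h-1}\,`d_{2q-2h+1} + `t_{2h-1}\,`t_{2q-2h+1} + `d_{2h-1}\,`q_{2q-2h+1}$. Each even-even product $P^T_{2h}\,P^T_{2q-2h}$ has degree only $q$, so only its second leading coefficient $`q_{2h}\,`t_{2q-2h} + `t_{2h}\,`q_{2q-2h}$ is picked up. Using the symmetries $h \leftrightarrow q-h+1$ and $h \leftrightarrow q-h$, then multiplying by $z^q$, summing over $q \ge 1$, and recognizing Cauchy products, I obtain the functional equation
\[\Tod(z) \;=\; z\,\Ev'(z) + \Sev(z) + 2z\,\Od(z)\,\Tod(z) + z\,\Sod(z)^2 + 2\,\Ev(z)\,\Sev(z).\]
Since $1 - 2z\,\Od(z) = \sqrt{1-4z}$, an immediate consequence of Proposition~\ref{prop:lead_co_P_odd}, this yields $\Tod(z)$ as an explicit combination of the previously computed series.

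The hard part is the algebraic simplification. One first notes that $z\,\Ev'(z) = z/(1-4z)^{3/2} = \Sod(z)$, which already hints at substantial cancellations. Expanding $2\,\Ev(z)\,\Sev(z)$ and collecting the four summands of the numerator, the $z/(1-4z)$ and $z^2/(1-4z)^{5/2}$ contributions cancel in pairs, leaving
\[\Tod(z) \;=\; \frac{2z}{(1-4z)^2} + \frac{z^2}{(1-4z)^{7/2}} + \frac{z^3}{(1-4z)^{7/2}}.\]
The closed form is then read off by standard coefficient extraction: $[z^k](1-4z)^{-2} = (k+1)\,4^k$ turns the first summand into $q\,2^{2q-1}$, while $[z^k](1-4z)^{-7/2} = (2k+6)!/(120\,k!\,(k+3)!)$ applied to the last two rational pieces yields $\frac{(q+1)q(q-1)}{120}\binom{2q+2}{q+1}$ and $\frac{q(q-1)(q-2)}{120}\binom{2q}{q}$ respectively, matching the three terms of the announced formula. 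The delicate step is the bookkeeping in the cancellation that collapses the numerator of $\Tod(z)\,\sqrt{1-4z}$ to such a compact form; everything else follows the recipe already used for $\Sod$ and $\Sev$.
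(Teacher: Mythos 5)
Your proposal is correct and takes essentially the same route as the paper's proof: the same recurrence for the third leading coefficients (shift term, third coefficients of the odd--odd products, second coefficients of the even--even products), the same functional equation for $\Tod(z)$ solved by dividing by $1-2z\,\Od(z)=\sqrt{1-4z}$, and the same final form $\frac{2z}{(1-4z)^2}+\frac{z^2+z^3}{(1-4z)^3\sqrt{1-4z}}$. (Your functional equation, with $z\,\Sod(z)^2$ and $2\,\Ev(z)\,\Sev(z)$, is the one the paper's subsequent algebra actually uses; the equation as printed there, with $z\,\Sev(z)^2$ and $2z\,\Ev(z)\,\Sev(z)$, contains typos.)
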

 \begin{proof}{}
    Since $deg(P^T_{2n}) = deg(P^T_{2n+1} )-1$, the third coefficient is the sum of seven
  items:
  \begin{itemize}
  \item the second coefficient of $`q_{2q}\,(m+1)^q$, namely $q`q_{2q}$,
  \item the first coefficient of $(m+1)^{q-1}$, namely $`t_{2q}$,
  \item the sum of products of leading coefficients and second leading coefficients for
    even polynomials (twice),
  \item the sum of leading coefficients and third leading coefficients for odd polynomials
    (twice),
  \item the sum of second leading coefficients with second leading coefficients.
  \end{itemize}
The formula for $`d_{2q+1}$ is:
\begin{eqnarray*}
  `d_{2q+1} &=& q\,`q_{2q} + `t_{2q} + \sum_{i=0}^q `t_{2i} `q_{2q-2i} + \sum_{i=0}^q
  `q_{2i}`t_{2q-2i} + \\
  &&\sum_{i=0}^{q-1} `q_{2i+1} `d_{2q-2i-1}  +  
  \sum_{i=0}^{q-1} `d_{2i+1} `q_{2q-2i-1} + \sum_{i=0}^{q-1} `t_{2i+1} `t_{2q-2i-1} ,
\end{eqnarray*}
which gives the following equation on generating functions:
\begin{eqnarray*}
  \Tod(z) &=& z\,\Ev'(z) + \Sev(z) + 2z\Ev(z)\Sev(z) +\\
  && 2z \,\Od(z)\Tod(z)+ z\,\Sev(z)^2.
\end{eqnarray*}
which yields:
\begin{eqnarray*}
  \Tod(z) &=& \frac{z\,\Ev'(z) + \Sev(z) + 2z\Ev(z)\Sev(z) + z\,\Sev(z)^2}{1-2z\,\Od(z)}\\
&=& \frac{1}{\sqrt{1-4z}}\,\left(\frac{z}{(1-4z)\sqrt{1-4z}} + \right.\\
&& \qquad \qquad \frac{z}{1-4z}+ \frac{z^2}{(1-4z)^2\sqrt{1-4z}}+\\
&&\qquad \qquad \frac{1-\sqrt{1-4z}}{\sqrt{1-4z}} \left(\frac{z}{1-4z} + \frac{z^2}{(1-4z)^2\sqrt{1-4z}}\right)+\\
&&\left.\qquad\qquad \frac{z^3}{(1-4z)^3}\right)\\
&=& \frac{2z}{(1-4z)^2} + \frac{z^2+z^3}{(1-4z)^3\sqrt{1-4z}}.
\end{eqnarray*}
The first part corresponds to sequence \textsf{A002699} which expression is
$q\,2^{2q-1}$. $1/(1-4z)^3\sqrt{1-4z}$ corresponds to sequence \textsf{A144395}. Therefore
the second part yields the expression %
\[\frac{q(q-1)(q-2)}{120}{2q\choose q}+ \frac{(q+1)q(q-1)}{120}{2(q+1)\choose q+1}.\]
  \end{proof}
Hence typically if we pose
\begin{eqnarray*}
  `t_{2q}&= & 4^{q-1} +  \frac{2 (2q - 5)  (2q - 3)  (2q -1)}{3}  C_{q-3} \\
  `d_{2q+1} &=& q\;2^{2q-1} +\frac{(q+1)q(q-1)(q-2)}{120}C_q+ \frac{(q+2)(q+1)q(q-1)}{120}C_{q+1}
\end{eqnarray*}
we have in general:
  \begin{eqnarray*}
P^T_{2q}(m) &=& (2q-1) C_{q-1} m^q + `t_{2q} m^{q-1} + \ldots + T_{2q,0}\\
P^T_{2q+1} (m) &=& C_q m^{q+1} + \frac{2q(2q-1)}{2} C_{q-1} m^{q} + `d_{2q+1} m^{q-1} + \ldots + T_{2q+1,0}
\end{eqnarray*}
showing the prominent role of Catalan numbers.  The relations for the other coefficients
are more convoluted\footnote{Like $`t_{2q}$ and $`d_{2q+1}$, they correspond to non
  studied sequences according to the \emph{On-Line Encyclopedia of Integer Sequences}.}
and have not been computed.

It should be interesting to study the connection with the derivatives of the generating
function $C(z)$ of the Catalan numbers~\cite{lang02:_polyn_cataly}. 
\section{Normal forms}
\label{sec:normal-forms}

Lest us call $\F_m$ the set of normal forms with $\{\Var{1},.., \Var{m}\}$ de Bruijn indices and $\G_m$ the sets
of normal forms with no head $`l$ and also  de Bruijn indices in $\{\Var{1},.., \Var{m}\}$.  The combinatorial structure equations are
\begin{eqnarray*}
  \G_m &=&  \mathcal{I}(m) \uplus \G_m @ \F_m\\
  \F_m &=& `l\,\F_{m+1} \uplus \G_m
\end{eqnarray*}
Let $G_{n,m}$ be the number of normal forms of size $n$ with no head $`l$ and with de
Bruijn indices in $\mathcal{I}(m)$ and let $F_{n,m}$ be the number of normal forms of size $n$
with de Bruijn indices in $\mathcal{I}(m)$.  The relations between $G_{n,m}$ and $F_{n,m}$
are
\begin{eqnarray*}
  G_{0,m} &=& 0\\
  G_{1,m} &=& m\\
  G_{n+1,m} &=& \sum_{k=0}^{n}G_{n-k,m} F_{k,m}\\\\
  F_{0,m} &=& 0\\
  F_{1,m} &=& m \quad = \quad G_{1,m} \\
  F_{n+1,m}&=& F_{n,m+1} + G_{n+1,m}
\end{eqnarray*}
whereas the relations between generating functions are
\begin{eqnarray*}
  G_m(z) &=& m\,z + z \,G_m(z)\,F_m(z)\\
  F_m(z) &=& z\, F_{m+1}(z) + G_m(z).
\end{eqnarray*}
The coefficients $F_{n,m}$ are given in Figure~\ref{fig:F}.  

\subsection*{The functions $m"|->" F_{n,m}$}
Like for $m "|->" T_{n,m}$, the functions $m "|->" F_{n,m}$ are polynomials of degree $n+1
\div 2$, which we write $P^{NF}_n$ and which are given in Figure~\ref{fig:polyF}.  The
coefficients of polynomials~$P^{NF}_n$ enjoy properties somewhat similar to those proved
for polynomials~$P^T_n$.  In this section, we write $P_n(m)$ the polynomial $P_n^{NF}(m)$,
$Q_n(m)$ the polynomial associated with $G_{n,m}$, $`v_n$ the leading coefficient of $P_n$,
$\phib{n}$ the leading coefficient of $Q_n$, $`j_n$ the second leading coefficient of
$P_n$ and $\psib{n}$ the second leading coefficient of of $Q_n$.  We have the equations
\begin{eqnarray}
  \label{eq:d}
  P_{n+1}(m) &=& P_n(m+1) + Q_{n+1}(m)\\\label{eq:e}
  Q_{n+1}(m) &=&\sum_{k=0}^nQ_{n-k}(m)P_k(m)
\end{eqnarray}
\begin{prop}\label{propNF:deg}
$deg(P_{2p-1}) = deg(P_{2p}) = deg(Q_{2p-1} )= deg(Q_{2p}) = p$.
\end{prop}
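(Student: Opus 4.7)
The plan is to prove the four equalities simultaneously by induction on $p$, following the outline of Proposition~\ref{prop:deg} but tracking the two intertwined sequences $P_n$ and $Q_n$ in parallel. The base case $p=1$ is handled by inspection: $P_1(m)=m$ has degree~$1$, $Q_1(m)=m$ has degree~$1$, and $P_2(m)=P_1(m+1)+Q_2(m)=m+1$ has degree~$1$ (using $Q_2(m)=0$). The zero polynomial $Q_2$ is a slight embarrassment against the literal statement, which I would either absorb by the convention $\deg 0 \leq p$ or bypass by taking $p=2$ as the real base case, computing $Q_3=Q_1 P_1=m^2$ and $Q_4=Q_1 P_2=m^2+m$ directly.

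For the inductive step, I would first establish the claim for $Q_{2p+1}$ and $Q_{2p+2}$ via~(\ref{eq:e}), then read off the claim for $P_{2p+1}$ and $P_{2p+2}$ from~(\ref{eq:d}). In the sum $Q_{2p+1}(m)=\sum_{k=0}^{2p}Q_{2p-k}(m)\,P_k(m)$ I would split by the parity of~$k$: if $k=2j$ then $Q_{2(p-j)}\,P_{2j}$ has degree $(p-j)+j=p$ by induction (and the $j=p-1$ summand involving $Q_2$ simply vanishes, harmlessly), whereas if $k=2j+1$ then $Q_{2(p-j)-1}\,P_{2j+1}$ has degree $(p-j)+(j+1)=p+1$. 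The odd-$k$ terms therefore dominate, and since all coefficients of every $P_n$ and every $Q_n$ are non-negative integers (each counts a class of normal forms), the leading terms cannot cancel, giving $\deg(Q_{2p+1})=p+1$. The analogous parity split applied to $Q_{2p+2}(m)=\sum_{k=0}^{2p+1}Q_{2p+1-k}(m)\,P_k(m)$ shows that every nonzero summand has degree exactly $p+1$, and positivity again rules out cancellation, so $\deg(Q_{2p+2})=p+1$.

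Equation~(\ref{eq:d}) then takes care of the $P$'s with no further work: the substitution $m \mapsto m+1$ preserves degree, so $\deg P_{2p+1}=\max(\deg P_{2p}(m+1),\,\deg Q_{2p+1})=\max(p,p+1)=p+1$, and likewise $\deg P_{2p+2}=\max(p+1,p+1)=p+1$. The only real obstacle I anticipate is bookkeeping — keeping the four parity cases straight and handling the $Q_2=0$ exception at $p=1$ — rather than anything structurally new; the substantive ingredient, just as in Proposition~\ref{prop:deg}, is the positivity of all coefficients, which rules out any accidental cancellation among the terms of top degree.
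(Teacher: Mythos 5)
Your proof is correct and follows essentially the same route the paper intends: the paper's own proof merely asserts positivity of the coefficients and says the inductive step ``can be mimicked'' from Proposition~\ref{prop:deg}, whereas you actually carry out the parity split on the sum in~(\ref{eq:e}) and then read off the degrees of the $P_n$'s from~(\ref{eq:d}). Your observation that $Q_2=0$ (so that $\deg Q_{2p}=p$ fails literally at $p=1$, as does the paper's claim that $\deg P_n=\deg Q_n$ at $n=2$) is a genuine edge case the paper silently ignores, and your way of absorbing it is appropriate.
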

\begin{proof}{}
Here also the coefficients are positive.  The degree of $P_n$ is the degree of $Q_n$ by
(\ref{eq:d}). One notices that $deg~P_0 = deg~Q_0 = 0$ and $deg~P_1 = deg~Q_1=1$.  The
general step can be mimicked from this of Prop~\ref{prop:deg}.
\end{proof}
We define eight generating
functions:
\begin{displaymath}
  \Fev(z) \ = \ \sum_{i=0}^{\infty} `v_{2i} z^i \qquad
  \Fevb(z) \ = \ \sum_{i=0}^{\infty} \phib{2i}z^i
\end{displaymath}
\begin{displaymath}
  \Fod(z) \ = \  \sum_{i=0}^{\infty} `v_{2i+1} z^i \qquad
  \Fodb(z) \ =\ \sum_{i=0}^{\infty} \phib{2i+1}z^i
\end{displaymath}
\begin{displaymath}
  \SFev(z) \ = \ \sum_{i=0}^{\infty} `j_{2i} z^i \qquad
  \SFevb(z) \ = \ \sum_{i=0}^{\infty} \psib{2i}z^i
\end{displaymath}
 \begin{displaymath}
  \SFod(z) \ = \  \sum_{i=0}^{\infty} `j_{2i+1} z^i \qquad
  \SFodb(z) \ =\ \sum_{i=0}^{\infty} \psib{2i+1}z^i
\end{displaymath}
 \begin{prop}
    The leading coefficients of $P^{NF}_{2q+1}$ are Catalan numbers.
  \end{prop}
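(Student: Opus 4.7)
The plan is to mimic the strategy used for Proposition~\ref{prop:lead_co_P_odd}, but now exploiting both equations~(\ref{eq:d}) and~(\ref{eq:e}), since the recursion on normal forms is split over $P_n$ and $Q_n$.

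First, I would use equation~(\ref{eq:d}) to transfer the leading coefficient question from $P$ to $Q$. Applying it with $n=2q$ gives
\[P_{2q+1}(m) \ =\ P_{2q}(m+1) + Q_{2q+1}(m).\]
By Proposition~\ref{propNF:deg}, $\deg P_{2q}(m+1) = q$ whereas $\deg Q_{2q+1} = q+1 = \deg P_{2q+1}$, so $P_{2q}(m+1)$ contributes nothing to the leading monomial. Hence $`v_{2q+1} = \phib{2q+1}$, and equivalently $\Fod(z) = \Fodb(z)$.

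Next, I would extract a recurrence for $\phib{2q+1}$ from equation~(\ref{eq:e}):
\[Q_{2q+1}(m) \ =\ \sum_{k=0}^{2q} Q_{2q-k}(m)\, P_k(m).\]
Splitting by the parity of $k$, Proposition~\ref{propNF:deg} yields $\deg(Q_{2q-2h}\,P_{2h}) = (q-h)+h = q$ when $k=2h$ is even, and $\deg(Q_{2q-2h-1}\,P_{2h+1}) = (q-h)+(h+1) = q+1$ when $k=2h+1$ is odd (with $0\le h\le q-1$). Only the odd-indexed pairs contribute to the leading coefficient, giving for $q\ge 1$
\[\phib{2q+1} \ =\ \sum_{h=0}^{q-1} `v_{2h+1}\,\phib{2q-2h-1}.\]
Using the identification $`v_{2i+1} = \phib{2i+1}$ obtained in the first step and the base case $`v_1 = 1$ (since $P_1(m)=m$), this is precisely the Catalan convolution, which translates into the functional equation
\[\Fod(z) \ =\ 1 + z\,\Fod(z)^2,\]
identical to the one obtained in Proposition~\ref{prop:lead_co_P_odd}. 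Solving yields $\Fod(z) = C(z)$, the generating function of the Catalan numbers, so $`v_{2q+1} = C_q$.

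No serious obstacle arises: the argument is essentially a reshuffling of Proposition~\ref{prop:lead_co_P_odd}, with equation~(\ref{eq:d}) providing the bridge between $P$ and $Q$. The only care needed is the parity bookkeeping that isolates the odd-odd products in the convolution defining $Q_{2q+1}$.
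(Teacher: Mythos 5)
Your proof is correct and follows essentially the same route as the paper's: establish $`v_{2q+1}=\phib{2q+1}$ from equation~(\ref{eq:d}) by degree considerations, extract the odd--odd convolution $\phib{2q+1}=\sum_{h=0}^{q-1}`v_{2h+1}\,\phib{2q-2h-1}$ from equation~(\ref{eq:e}), and solve the resulting Catalan functional equation as in Proposition~\ref{prop:lead_co_P_odd}. You simply spell out the parity bookkeeping that the paper leaves implicit.
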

  \begin{proof}{}
    We see easily that $`v_{2q+1} = \phib{2q+1}$ by (\ref{eq:d}).  By (\ref{eq:e}), we see
    that
    \begin{eqnarray*}
      `v_{2q+1} &=& \sum_{h=0}^{q-1} \phib{2q+1}`v_{2q-2h-}1\\
      `v_1 &=& \phib{1} = 1.
    \end{eqnarray*}
    Hence the result $\Fod(z) = \Fodb(z) = C(z)$ (see proof of
    Proposition~\ref{prop:lead_co_P_odd}).
  \end{proof}
  \begin{prop}\label{prop:Fev}
    The leading coefficients of the $P^{NF}_n$'s for $n$ even, are $P^{NF}_0 =0$, $P^{NF}_2
    =1$ and $P^{NF}_{2q+4} = 2{2q+1 \choose q}$, 
    \ie $P^{NF}_{2q+4} = 2P^T_{2q+2}$.
  \end{prop}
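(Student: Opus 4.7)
{}
The plan is to mimic the generating-function strategy used for the $P_n^T$'s, now exploiting the two relations (\ref{eq:d}) and (\ref{eq:e}) together with Proposition~\ref{propNF:deg} to extract the leading coefficient of every even $P^{NF}_n$ in a single closed form.

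First I would look at equation~(\ref{eq:d}) at the level of leading coefficients. Since $\deg P_{2p-1} = \deg P_{2p} = p$, the change of variable $m \mapsto m+1$ in $P_{2p-1}$ preserves its leading coefficient, so one obtains the linear relation $`v_{2p} = `v_{2p-1} + \phib{2p}$. Translating term by term, this gives
\[
\Fev(z) = z\,\Fod(z) + \Fevb(z).
\]
Since $\Fod(z) = \Fodb(z) = C(z)$ by the preceding proposition, it only remains to determine $\Fevb(z)$.

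For that I would turn to equation~(\ref{eq:e}) with $n+1 = 2p$ and identify, parity by parity, which products $Q_{2p-1-k} P_k$ actually contribute to the degree-$p$ term. The degree bookkeeping (entirely analogous to the four cases in the proof of Proposition~\ref{prop:deg}) shows that all summands reach the maximal degree $p$, splitting cleanly into the two convolutions
\[
\phib{2p} = \sum_{h=0}^{p-1} \phib{2(p-h)-1} `v_{2h} + \sum_{h=0}^{p-1} \phib{2(p-h-1)} `v_{2h+1},
\]
where the $h=0$ term of the first sum vanishes because $`v_0 = 0$. A straightforward re-indexing turns this into the functional equation
\[
\Fevb(z) = z\,\Fodb(z)\,\Fev(z) + z\,\Fod(z)\,\Fevb(z).
\]
Substituting $\Fodb = \Fod$ and $\Fev = z\Fod + \Fevb$ yields $\Fevb(z)\,(1 - 2z\,\Fod(z)) = z^2\Fod(z)^2$, and using $2z\,C(z) = 1-\sqrt{1-4z}$ I would obtain $\Fevb(z) = (1-2z)/(2\sqrt{1-4z}) - 1/2$ and, after one more simplification,
\[
\Fev(z) = \frac{z}{\sqrt{1-4z}}.
\]

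Finally I would read off the coefficients via the classical identity $1/\sqrt{1-4z} = \sum_{n\ge 0} \binom{2n}{n}z^n$, giving $`v_{2q} = \binom{2q-2}{q-1}$ for $q\ge 1$ and $`v_0 = 0$. The announced values $`v_0 = 0$, $`v_2 = 1$ and $`v_{2q+4} = \binom{2q+2}{q+1} = 2\binom{2q+1}{q}$ follow at once (the last identity being $\binom{2q+2}{q+1} = \frac{2q+2}{q+1}\binom{2q+1}{q}$), and the coincidence $`v_{2q+4} = 2\,P^T_{2q+2}$ is then a direct comparison with the leading coefficient $\binom{2q+1}{q+1}$ of $P^T_{2q+2}$. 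The delicate step is really the combinatorial bookkeeping for $\phib{2p}$: since, contrary to the $T$ case, none of the parity sub-cases drops out on degree grounds, one must be careful to keep every summand and to recognize the resulting expression as exactly the two convolutions above before generating-function manipulations take over.
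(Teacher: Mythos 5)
Your proposal is correct and follows essentially the same route as the paper: the same two leading-coefficient recurrences extracted from (\ref{eq:d}) and (\ref{eq:e}), the same functional equations $\Fev(z)=z\Fod(z)+\Fevb(z)$ and $\Fevb(z)=z\Fodb(z)\Fev(z)+z\Fod(z)\Fevb(z)$, and the same closed form $\Fev(z)=z/\sqrt{1-4z}$. The only difference is cosmetic --- you eliminate $\Fev$ first to get $\Fevb(z)(1-2z\Fod(z))=z^2\Fod(z)^2$, while the paper first solves for $\Fevb$ in terms of $\Fev$ --- and your coefficient extraction and the identity $\binom{2q+2}{q+1}=2\binom{2q+1}{q}$ match the paper's conclusion.
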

  \begin{proof}{}
    From Equations~(\ref{eq:d}) and (\ref{eq:e}),we get:
    \begin{eqnarray*}
      `v_{2(q+1)} &=& `v_{2q+1} + \phib{2(q+1)}\\
      \phib{2(q+1)} &=& \sum_{i=0}^q \phib{2q+1-2i} `v_{2i} + \sum_{i=0}^q \phib{2q-2i} `v_{2i+1} 
    \end{eqnarray*}
Hence
\begin{eqnarray*}
  \Fev(z) &=& z \Fod(z) + \Fevb(z)\\
\Fevb(z) &=& z \Fodb(z)\Fev(z) + z \Fevb(z) \Fod(z)
\end{eqnarray*}
from which we get 
\begin{eqnarray} \label{eq:Fevb}
  \Fevb(z) \ = \ \frac{z\Fod(z)\Fev(z)}{1-z \Fod(z)}
\end{eqnarray}
then
\begin{displaymath}
  \Fev(z) \ = \ z\Fod(z) + \frac{z\Fod(z)\Fev(z)}{1-z \Fod(z)}
\end{displaymath}
and
\begin{displaymath}
  \Fev(z) - z \,\Fod(z)\Fev(z) \ = \ z\, \Fod(z) - z^2\Fod(z)^2 + z\, \Fod(z)\Fev(z)
\end{displaymath}
and
\begin{eqnarray*}
  \Fev(z) &=&\frac{z \Fod(z) - z^2\Fod(z)^2}{1-2z \Fod(z)}\\
    &=& \frac{z}{\sqrt{1-4z}} \ = \ \frac{z\sqrt{1-4z}}{1-4z} \\
    &=& \frac{z}{1-2zC(z)}.
\end{eqnarray*}
Hence $\Fev(z)$ is the generating function of the sequence $`v_0=0$, $`v_2=1$ and
$`v_{2q+4} = 2{2q+1 \choose q}$.
  \end{proof}
  \begin{corollary}
    $\Fevb(z) = \frac{1-2z - \sqrt{1-4z}}{2\sqrt{1-4z}}$
  \end{corollary}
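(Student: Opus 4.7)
The plan is to substitute the known closed forms for $\Fod(z)$ and $\Fev(z)$ into Equation~(\ref{eq:Fevb}) and then simplify by rationalizing. From Proposition on the odd leading coefficients we have $\Fod(z) = C(z) = \frac{1-\sqrt{1-4z}}{2z}$, and from Proposition~\ref{prop:Fev} we have $\Fev(z) = \frac{z}{\sqrt{1-4z}}$; these are the only two ingredients I need.

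First I would rewrite the denominator of~(\ref{eq:Fevb}): since $z\Fod(z) = \frac{1-\sqrt{1-4z}}{2}$, one gets $1-z\Fod(z) = \frac{1+\sqrt{1-4z}}{2}$. Substituting into~(\ref{eq:Fevb}) yields
\begin{displaymath}
  \Fevb(z) \ = \ \frac{\tfrac{1-\sqrt{1-4z}}{2}\cdot\tfrac{z}{\sqrt{1-4z}}}{\tfrac{1+\sqrt{1-4z}}{2}} \ = \ \frac{z\,(1-\sqrt{1-4z})}{\sqrt{1-4z}\,(1+\sqrt{1-4z})}.
\end{displaymath}

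The next step is to multiply numerator and denominator by the conjugate $(1-\sqrt{1-4z})$, which collapses the denominator via $(1+\sqrt{1-4z})(1-\sqrt{1-4z}) = 1-(1-4z) = 4z$. This gives
\begin{displaymath}
  \Fevb(z) \ = \ \frac{z\,(1-\sqrt{1-4z})^2}{4z\sqrt{1-4z}} \ = \ \frac{(1-\sqrt{1-4z})^2}{4\sqrt{1-4z}}.
\end{displaymath}
Expanding $(1-\sqrt{1-4z})^2 = 2-4z-2\sqrt{1-4z}$ and dividing by $4\sqrt{1-4z}$ produces the announced expression $\frac{1-2z-\sqrt{1-4z}}{2\sqrt{1-4z}}$. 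There is no real obstacle here; the only thing to be careful about is choosing the right conjugate to multiply by so that the $\sqrt{1-4z}$ in the denominator lines up with the factor $z$ in the numerator, which is what makes the cancellation clean.
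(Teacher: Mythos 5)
Your computation is correct, but you take a slightly different route from the paper. The paper does not go back to Equation~(\ref{eq:Fevb}) at all: it uses the other relation established in the proof of Proposition~\ref{prop:Fev}, namely $\Fev(z) = z\,\Fod(z) + \Fevb(z)$, and simply writes $\Fevb(z) = \Fev(z) - z\,C(z) = \frac{z}{\sqrt{1-4z}} - \frac{1-\sqrt{1-4z}}{2}$, which reduces to the stated expression by putting everything over the common denominator $2\sqrt{1-4z}$ (the paper also notes in passing that this equals $z^2C'(z)$). Your version instead substitutes $\Fod(z)=C(z)$ and $\Fev(z)=\frac{z}{\sqrt{1-4z}}$ into the quotient~(\ref{eq:Fevb}) and rationalizes with the conjugate $1-\sqrt{1-4z}$, using $(1+\sqrt{1-4z})(1-\sqrt{1-4z})=4z$. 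Both arguments rest on the same two closed forms and are equally rigorous; the paper's subtraction is a one-line computation, while your route is a little longer but has the side benefit of serving as a consistency check that the two relations $\Fev = z\Fod + \Fevb$ and~(\ref{eq:Fevb}) are compatible with the closed form found for $\Fev$.
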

  \begin{proof}{} 
    \begin{eqnarray*}
      \Fevb(z) &=& \Fev(z) - z C(z)\\
      &=& \frac{z}{\sqrt{1-4z}} - \frac{1-\sqrt{1-4z}}{2} =  z^2 C'(z).
    \end{eqnarray*}
  \end{proof}
 \begin{prop}
The second leading coefficients of  the $P^{NF}_{2q+1}$'s are $`j_0=0$, ${`j_3=1}$ and
$`j_{2q+5} = (q+3) {2q+1 \choose q}$.
\end{prop}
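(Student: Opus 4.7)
The plan is to derive a joint recurrence for the second leading coefficients of $P_{2q+1}$ and $Q_{2q+1}$, translate to generating functions, solve in closed form, and extract the coefficient. The shape of the argument is exactly parallel to the preceding proposition for the leading coefficients, together with the analogous computation done earlier for $\Sod(z)$ in the $\mathcal{T}$-case.

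\textbf{Recurrence.} From~(\ref{eq:d}) at $n=2q$, since $\deg P_{2q}=q$ by Proposition~\ref{propNF:deg}, only the leading term of $P_{2q}(m+1)$ reaches the $m^q$-level of $P_{2q+1}$, hence $`j_{2q+1} = `v_{2q} + \psib{2q+1}$, where $\psib{2q+1}$ is the second leading coefficient of $Q_{2q+1}$. Expanding~(\ref{eq:e}) at $n=2q$, the parities of $k$ and $2q-k$ agree: the odd--odd products $Q_{2(q-i)-1}P_{2i+1}$ attain the full degree $q+1$ of $Q_{2q+1}$, while the even--even products $Q_{2(q-i)}P_{2i}$ top out at degree~$q$. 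Collecting the $m^q$ contributions gives
\begin{displaymath}
\psib{2q+1} \ = \ \sum_{i=0}^{q-1}\bigl(\phib{2(q-i)-1}\,`j_{2i+1} + \psib{2(q-i)-1}\,`v_{2i+1}\bigr) \ + \ \sum_{i=0}^{q}\phib{2(q-i)}\,`v_{2i}.
\end{displaymath}

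\textbf{Generating functions.} Using $\Fod(z) = \Fodb(z) = C(z)$ from the preceding proposition and the identity $\Fev(z) = z\Fod(z) + \Fevb(z)$ obtained in the proof of Proposition~\ref{prop:Fev}, the two recurrences translate to
\begin{displaymath}
\SFod(z) = \Fev(z) + \SFodb(z), \qquad \SFodb(z) = z\Fod(z)\bigl(\SFod(z)+\SFodb(z)\bigr) + \Fevb(z)\,\Fev(z).
\end{displaymath}
Substituting the first into the second and using $\Fev = z\Fod + \Fevb$ to collapse the inhomogeneous part, the system reduces to the single equation
\begin{displaymath}
\SFodb(z)\bigl(1 - 2z\Fod(z)\bigr) \ = \ \Fev(z)^2.
\end{displaymath}

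\textbf{Closed form and extraction.} Since $1 - 2z\Fod(z) = \sqrt{1-4z}$ and $\Fev(z) = z/\sqrt{1-4z}$, one immediately obtains $\SFodb(z) = z^2/(1-4z)^{3/2}$ and
\begin{displaymath}
\SFod(z) \ = \ \frac{z}{\sqrt{1-4z}} + \frac{z^2}{(1-4z)^{3/2}} \ = \ \frac{z-3z^2}{(1-4z)^{3/2}}.
\end{displaymath}
Expanding via $(1-4z)^{-3/2} = \sum_{n\ge 0}(2n+1)\binom{2n}{n}z^n$ and simplifying with the elementary identities $\binom{2q+2}{q+1} = 2\binom{2q+1}{q}$ and $(2q+1)\binom{2q}{q} = (q+1)\binom{2q+1}{q}$, the extraction $[z^{q+2}]\SFod(z)$ yields $`j_{2q+5} = (q+3)\binom{2q+1}{q}$, and the base values $`j_1 = 0$, $`j_3 = 1$ are read off at $q=-1,-2$ (or checked directly). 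The delicate step is the parity bookkeeping in the first paragraph: one must identify precisely which combination of leading and second leading coefficients in each product $Q_{2q-k}P_k$ feeds into the $m^q$ coefficient of $Q_{2q+1}$, since only the odd--odd products need a degree drop while the even--even products already land at the right level.
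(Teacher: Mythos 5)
Your proof is correct and follows essentially the same route as the paper: the same recurrence for the second leading coefficients of $P_{2q+1}$ and $Q_{2q+1}$ obtained by parity bookkeeping in the products $Q_{2q-k}P_k$, the same generating-function system for $\SFod$ and $\SFodb$, and the same closed form $\SFod(z)=\frac{z}{\sqrt{1-4z}}+\frac{z^2}{(1-4z)\sqrt{1-4z}}$. Your algebraic reduction to $\SFodb(z)\bigl(1-2z\Fod(z)\bigr)=\Fev(z)^2$ is a bit slicker than the paper's detour through $1-zC(z)=1/C(z)$ and $1-zC(z)^2=C(z)\sqrt{1-4z}$, and your explicit binomial extraction of $[z^{q+2}]\SFod(z)$ supplies a step the paper only asserts, but these are refinements, not a different method.
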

\begin{proof}{}
From the proof of Proposition~\ref{propNF:deg},
\begin{eqnarray*}
  `j_{2q+1} &=& `v_{2q} + \psib{2q+1}\\
  \psib{1}&=&0\\
\psib{2q+3} &=& \sum_{i=0}^{q+1} \phib{2i} `v_{2q-2i} + \sum_{i=0}^{q} \psib{2i+1}
`v_{2q-2i+1} + \sum_{i=0}^{q} `j_{2i+1} \phib{2q-2i+1},
\end{eqnarray*}
from which we get
\begin{eqnarray*}
  \SFod(z) &=& \Fev(z)+\SFodb(z)\\
\SFodb(z) &=& \Fevb(z)\Fev(z) + z\, \SFodb(z) \Fod(z) + z\, \SFod(z) \Fodb(z).
\end{eqnarray*}
Then we get
\begin{eqnarray*}
  \SFodb(z) (1-z\Fod(z))&=& \Fev(z) \Fevb(z) + z \SFod(z) \Fodb(z).
\end{eqnarray*}
We know that $1-z\Fod(z) = 1-zC(z) = 1/ C(z)$, then
\begin{eqnarray*}
  \SFodb(z) &=& \frac{z}{\sqrt{1-4z}}\ z^2C'(z) C(z) + z\, \SFod(z) C(z)^2
\end{eqnarray*}
and
\begin{eqnarray*}
  \SFod(z) &=& \Fev(z) + \frac{z^3\,C(z)C'(z)}{\sqrt{1-4z}} + z\SFod(z) C(z)^2.
\end{eqnarray*}
We know $1 - z\,C(z)^2 = C(z) \sqrt{1-4z}$, then
\begin{eqnarray*}
  \SFod(z) &=& \left(\frac{z}{\sqrt{1-4z}} + \frac{z^3C(z)C'(z)}{\sqrt{1-4z}}\right)\,
\frac{1}{C(z)\,\sqrt{1-4z}}\\
&=& \frac{z}{C(z)\,(1-4z)} + \frac{z^3\,C'(z)}{1-4z}\\
&=& \frac{z^2}{(1-4z)\sqrt{1-4z}} + \frac{z}{\sqrt{1-4z}}.
\end{eqnarray*}
which is the generating function of the sequence $0, 1$ followed by  $(q+3) {2q+1 \choose q}$.
\end{proof}
\begin{corollary}
  \begin{math}
  \SFodb(z) = \frac{z^2}{(1-4z)\sqrt{1-4z}}
\end{math}
\end{corollary}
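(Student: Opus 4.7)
The plan is to derive the expression for $\SFodb(z)$ directly from the formula for $\SFod(z)$ established in the preceding proposition, using the relation between $\SFod$ and $\SFodb$ that comes out of the defining recurrences.

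First, I would recall the identity $\SFod(z) = \Fev(z) + \SFodb(z)$ that was obtained in the proof of the preceding proposition from the equation $`j_{2q+1} = `v_{2q} + \psib{2q+1}$. Rearranging gives $\SFodb(z) = \SFod(z) - \Fev(z)$, so the corollary will follow from a single subtraction.

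Next, I would substitute the closed forms already at hand: the expression $\SFod(z) = \frac{z^2}{(1-4z)\sqrt{1-4z}} + \frac{z}{\sqrt{1-4z}}$ just established, and the formula $\Fev(z) = \frac{z}{\sqrt{1-4z}}$ from Proposition~\ref{prop:Fev}. The two copies of $\frac{z}{\sqrt{1-4z}}$ cancel, leaving exactly $\SFodb(z) = \frac{z^2}{(1-4z)\sqrt{1-4z}}$, which is the claim.

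There is essentially no obstacle here: the work was done in the proposition, and the corollary is just an algebraic consequence of the linear relation linking $\SFod$ and $\SFodb$ to $\Fev$. The only thing worth double-checking is the sign and placement of $\Fev(z)$ in the relation, to make sure the cancellation is exact rather than producing an unwanted residual term; this is immediate from the combinatorial definitions of $`j$ and $\psib{}$.
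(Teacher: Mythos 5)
Your proposal is correct and matches the paper's own proof: the paper likewise computes $\SFodb(z) = \SFod(z) - \Fev(z) = \frac{z^2}{(1-4z)\sqrt{1-4z}}$, using the relation $\SFod(z)=\Fev(z)+\SFodb(z)$ from the preceding proposition and cancelling the $\frac{z}{\sqrt{1-4z}}$ terms. The only extra content in the paper is the closing remark that $\SFodb(z)=z\,\Sod(z)$, which is an observation rather than part of the argument.
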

\begin{proof}{}
  \[\SFodb(z) = \SFod(z) -\Fev(z) = \frac{z^2}{(1-4z)\sqrt{1-4z}}.\]
Notice that $\SFodb(z) = z\Sod(z)$.
\end{proof}
\begin{prop}
  The second leading coefficients of  the $P^{NF}_{2q}$'s are $`j_0=0$, ${`j_{2} = 1}$,
  $`j_6 = 15$ and for $q\ge 4$
\begin{eqnarray*}
  `j_{2q} &=& {2q-3 \choose q-2} + 2^{2q-3} + (q-2){2q-2 \choose q-2} + \\
  && 2{2q-5 \choose q-3} + \frac{(q-3)(q-2)}{3}{2q-5 \choose q-3}.
\end{eqnarray*}
\end{prop}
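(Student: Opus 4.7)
{}
The plan is to extract a recurrence for $`j_{2q}$ and $\psib{2q}$ from the defining equations, translate it into a system of functional equations for the generating series $\SFev$ and $\SFevb$, solve that system using the closed forms already obtained in the preceding propositions, and then recognize the resulting expression as a sum of five standard series.

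First, I would equate coefficients of $m^{q-1}$ on both sides of (\ref{eq:d}) and (\ref{eq:e}). From (\ref{eq:d}), using that $P_{2q-1}(m+1)$ has coefficient $q\,`v_{2q-1}+`j_{2q-1}$ in degree $q-1$, one obtains
\[`j_{2q}\ =\ q\,`v_{2q-1}+`j_{2q-1}+\psib{2q}.\]
Each product $Q_{2q-1-k}(m)P_k(m)$ in (\ref{eq:e}) has degree exactly $q$, so its coefficient of $m^{q-1}$ is $\phib{2q-1-k}\,`j_k+\psib{2q-1-k}\,`v_k$. Splitting the sum according to the parity of $k$ yields four bilinear convolutions, each of which becomes a product of generating series after multiplication by $z^q$ and summation over $q\ge 1$.

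Using $\Fod = \Fodb = C$ together with the identities $\sum_q `v_{2q-1}z^q = zC(z)$, $\sum_q `j_{2q-1}z^q = z\SFod(z)$ and $\sum_q q\,`v_{2q-1}z^q = zC(z)+z^2C'(z)$, one arrives at the linear system
\begin{eqnarray*}
\SFev(z) &=& zC(z)+z^2C'(z)+z\SFod(z)+\SFevb(z),\\
\SFevb(z) &=& zC(z)\SFev(z)+zC(z)\SFevb(z)+z\SFodb(z)\Fev(z)+z\Fevb(z)\SFod(z).
\end{eqnarray*}
Using $1-zC(z) = 1/C(z)$ and $1-zC(z)^2 = C(z)\sqrt{1-4z}$, one eliminates $\SFevb$ to obtain a single closed expression for $\SFev(z)$ in terms of $C(z)$, $C'(z)$, $\sqrt{1-4z}$ and the previously computed $\Fev, \Fevb, \SFod, \SFodb$.

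After substituting the explicit forms and simplifying, $\SFev(z)$ should decompose into five elementary pieces matching the five summands of the claim: a term proportional to $z\Ev(z)$ producing the ${2q-3 \choose q-2}$ contribution, a term of the form $z/(1-4z)$ producing $2^{2q-3}$, a $z^2C'(z)$-type term producing $(q-2){2q-2 \choose q-2}$, and two shifts of $\SFodb(z) = z\Sod(z)$ yielding the two remaining Catalan-type terms involving ${2q-5 \choose q-3}$. The main obstacle is the algebraic bookkeeping: the five right-hand-side ingredients naturally carry different powers of $\sqrt{1-4z}$, so extracting the clean five-term decomposition requires careful partial-fraction-style manipulation and repeated use of $zC(z)^2=C(z)-1$. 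The exceptional small cases $`j_0=0$, $`j_2=1$, $`j_4$ and $`j_6=15$ then have to be checked directly from the first terms of the series expansion of $\SFev(z)$, since the binomial ${2q-5 \choose q-3}$ is only well defined once $q\ge 4$.
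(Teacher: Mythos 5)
Your proposal follows essentially the same route as the paper's own proof: the same coefficient recurrences extracted from (\ref{eq:d}) and (\ref{eq:e}), the same linear system for $\SFev(z)$ and $\SFevb(z)$, elimination using $1-zC(z)=1/C(z)$ and $1-zC(z)^2=C(z)\sqrt{1-4z}$, and the same final decomposition of $\SFev(z)$ into five elementary series whose coefficients give the five summands. The only remaining work in both versions is the routine algebraic simplification and the direct check of the small cases, so there is nothing substantive to add.
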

\begin{proof}{}
  We have
  \begin{eqnarray*}
    `j_{2q+2} &=& (q+1)`v_{2q+1} + `j_{2q+1} + \psib{2q+2}\\
    \psib{2q+2} &=& \sum_{i=1}^{q}`j_{2i-1}\phib{2q-2i+2} + \sum_{i=1}^q
    `v_{2i-1}\psib{2q-2i+2} +\\
    && \sum_{i=1}^q \phib{2i-1} `j_{2q-2i+2} + \sum_{i=1}^q\psib{2i-1} `v_{2q-2i+2}.
  \end{eqnarray*}
This gives the equations on generic functions.
\begin{eqnarray*}
\SFev(z) &=& z \Fod(z) + z^2\Fod'(z) + z \SFod(z) + \SFevb(z)\\
  \SFevb(z) &=& z \SFod(z)\Fevb(z) + z \Fod(z) \SFevb(z) +\\
  && z \SFev(z) \Fodb(z) + z \Fev(z)\SFodb(z).
\end{eqnarray*}
Hence
\begin{eqnarray*}
  \SFevb(z) &=& \frac{z \SFod(z)\Fevb(z) + z \SFev(z) \Fodb(z) + z \Fev(z)\SFodb(z)}{1-zC(z)}
\end{eqnarray*}
which yields
\begin{eqnarray*}
  \SFev(z) & = &  \Fod(z) + z^2\Fod'(z) + z \SFod(z) + \\
  && C(z) ( z \SFod(z)\Fevb(z) + z \Fev(z)\SFodb(z)) \\
  && z C(z) \SFev(z) \Fodb(z).
\end{eqnarray*}
and
\begin{eqnarray*}
  SFev(z) &=& \frac{\Fod(z) + z^2\Fod'(z) + z \SFod(z)}{1-z C(z)^2}+ \\
  && \frac{z C(z) \SFod(z)\Fevb(z) + z    C(z)\Fev(z)\SFodb(z)}{1-z C(z)^2}\\
 &=& \frac{z}{\sqrt{1-4z}} + \frac{z^2 C'(z)}{C(z)\sqrt{1-4z}} + \\
 && \frac{z}{C(z)\sqrt{1-4z}}\left(\frac{z^2}{(1-4z)\sqrt{1-4z}} +  \frac{z}{\sqrt{1-4z}}\right) + \\
 && \left( \frac{z}{\sqrt{1-4z}} - \frac{1-\sqrt{1-4z}}{2}\right)\left( \frac{z^2}{(1-4z)^2} + \frac{z^2}{1-4z}\right) + \\
&& \frac{z^4}{(1-4z)^2\sqrt{1-4z}}.
\end{eqnarray*}
Notice that
\begin{eqnarray*}
  \frac{z^2 C'(z)}{C(z)\sqrt{1-4z}} &=&\frac{z}{2(1-4z)} - \frac{z}{2\sqrt{1-4z}}.
      \end{eqnarray*}
and
  \begin{eqnarray*}
 \frac{z}{C(z)\sqrt{1-4z}}\left(\frac{z^2}{(1-4z)\sqrt{1-4z}} +
   \frac{z}{\sqrt{1-4z}}\right)  + \\
\left( \frac{z}{\sqrt{1-4z}} - \frac{1-\sqrt{1-4z}}{2}\right)\left( \frac{z^2}{(1-4z)^2} +
  \frac{z^2}{1-4z}\right)  &=& \frac{2z^3 }{\sqrt{1-4z}(1-4z)} +
\frac{z^2}{\sqrt{1-4z}}+\\
&&\frac{z^4}{\sqrt{1-4z}(1-4z)^2} 
\end{eqnarray*}
Hence
\begin{eqnarray*}
  \SFev(z) &=& \frac{z}{2\sqrt{1-4z}} + \frac{z}{2(1-4z)} +\\ && \frac{2z^3 }{\sqrt{1-4z}(1-4z)} +
\frac{z^2}{\sqrt{1-4z}}+\frac{2z^4}{\sqrt{1-4z}(1-4z)^2} 
\end{eqnarray*}

We summarize the result in the following table.

\medskip

\begin{center}
  \begin{tabular}{l| l | l |l}
    \textit{gen. fonct.} & \textit{coefficients} & \textit{up to}& \textit{why?} \\\hline\hline
    $\frac{z}{2\sqrt{1-4z}}$ & ${2q-3 \choose q-2}$ & $q\ge 2$ & Proposition~\ref{prop:Fev} \\\hline
    $\frac{z}{2(1-4z)}$ &  $2^{2q-3}$ &  $q\ge 2$ \\\hline
    $\frac{2z^3 }{\sqrt{1-4z}(1-4z)}$ & $(q-2){2q-2 \choose q-2}$& $q \ge 2$\\\hline
    $\frac{z^2}{\sqrt{1-4z}}$ & $2{2q-5 \choose q-3}$ & $q\ge 3$\\\hline
    $\frac{2z^4}{\sqrt{1-4z}(1-4z)^2}$ &  $\frac{(q-3)(q-2)}{3}{2q-5 \choose q-3}$& $q\ge 4$& \textsf{A002802}\\\hline
  \end{tabular}
\end{center}

\medskip

Hence we have for $q\ge 4$:
\begin{eqnarray*}
  `j_{q} &=& {2q-3 \choose q-2} + 2^{2q-3} + (q-2){2q-2 \choose q-2} + \\
  && 2{2q-5 \choose q-3} + \frac{(q-3)(q-2)}{3}{2q-5 \choose q-3}.
\end{eqnarray*}
\end{proof}
Recall what we have computed for\textit{ plain terms}:
\begin{displaymath}
\begin{array}{|l|l|l|l|l|}
\hline\hline
\textsl{coefficients}&\multicolumn{2}{c|}{\textsl{generating functions}}& \textsl{values} & \textsl{equivalents}\\
\hline\hline
P^T_{2q+1,q+1}&\Od(z)& \frac{1-\sqrt{1-4z}}{2z}&C_q& 4^q\sqrt{\frac{1}{\pi q^3}}\\ \hline
P^T_{2q+1,q}&\Sod(z) & \frac{z}{(1-4z)\sqrt{1-4z}}&
(2q-1) {2(q-1)\choose q-1}& 4^q\;\frac{1}{2}\;\sqrt{\frac{q}{`p}}\\ 
\hline
P^T_{2q+1,q-1} & \Tod(z)& %
\begin{array}{l}
  \frac{2z}{(1-4z)^2} +\\[2pt]
  \frac{z^2+z^3}{(1-4z)^3\sqrt{1-4z}}
\end{array}
&
\begin{array}{l}
q\;2^{2q-1} +\frac{q(q-1)(q-2)}{120}{2q\choose q}+\\[2pt]
\frac{(q+1)q(q-1)}{120}{2(q+1) \choose q+1}\\[2pt]
\end{array}
& 4^q\frac{1}{24}\;\sqrt{\frac{q^5}{`p}}
\\
\hline\hline
P^T_{2q,q}&\Ev(z) &  \frac{4z-1 + \sqrt{1-4z}}{2(1-4z)} & { 2q-1 \choose
  q}&4^q\;\frac{1}{2}\;\sqrt{\frac{1}{`p q}}\\[2pt]  \hline
P^T_{2q,q-1}&\Sev(z) &
\begin{array}{l}
\frac{z}{1-4z} +\\ \frac{z^2}{(1-4z)^2\sqrt{1-4z}}
\end{array}
& \begin{array}{l}4^{q-1} +\\ \frac{2 (2q - 5)  (2q - 3)  (2q
  -1)}{3(q-2)}  {2(q-3) \choose q-3}
\end{array}
& 4^q\;\frac{1}{12}\;\sqrt{\frac{q^3}{`p}}\\\hline
\end{array}
\end{displaymath}

and for \textit{normal forms}
\begin{displaymath}
\begin{array}{|l|l|l|l|l|}
\hline\hline
\textsl{coefficients}&\multicolumn{2}{c|}{\textsl{generating functions}}& \textsl{values} & \textsl{equivalents}\\
\hline\hline
P_{2q+1,q+1}^{NF} &  \Fod(z) & \frac{1-\sqrt{1-4z}}{2z} & C_q&4^q\sqrt{\frac{1}{\pi q^3}}\\
\hline
P_{2q+1,q}^{NF} &  \SFod(z) &  \frac{z}{\sqrt{1-4z}} + \frac{z^2}{(1-4z)\sqrt{1-4z}}& (q+1) {2q-3 \choose q-2} 
&4^q \frac{1}{8} \sqrt{\frac{q}{`p}}\\
\hline\hline
P_{2q,q}^{NF} &  \Fev(z) & \frac{z}{\sqrt{1-4z}} & 2{2q-3 \choose q-2} &4^q \frac{1}{4} \sqrt{\frac{1}{`p q}}\\
\hline
P_{2q,q-1}^{NF} & \SFev(z) &
\begin{array}{l}
\frac{z}{2\sqrt{1-4z}} + \frac{z}{2(1-4z)} +\\  \frac{2z^3 }{(1-4z)\sqrt{1-4z}} +\\
\frac{z^2}{\sqrt{1-4z}}+\\\frac{2z^4}{(1-4z)^2\sqrt{1-4z}}
\end{array} &
\begin{array}{l}
{2q-3 \choose q-2} + 2^{2q-3} + \\ (q-2){2q-2 \choose q-2} + \\
   2{2q-5 \choose q-3} + \\ \frac{(q-3)(q-2)}{3}{2q-5 \choose q-3}
\end{array} 
& 4^q\;\frac{1}{96}\;\sqrt{\frac{q^3}{`p}}\\
\hline
\end{array}
\end{displaymath}

We notice that the coefficients of the $P_n^{NF}$'s have the same asymptotic behavior as the
coefficients of $P_n^{T}$'s, with a slightly smaller constant, $1/8$ or $1/4$ for $1/2$
and $1/96$ for $1/12$.
Notice, in particular, that the results $P_{2q,q}^{NF} \sim \frac{1}{2}\, P_{2q,q}^{T}$
and $P_{2q+1,q}^{NF} \sim \frac{1}{4}\, P_{2q+1,q}^{T}$ comes from the identities.
\begin{eqnarray*}
  2{2q-3 \choose q-2} &=& \frac{q}{2q-1} {2q-1 \choose q}  \\
 (q+1) {2q-3 \choose q-2} &=& \frac{q+1}{2(2q-1)}\,(2q-1) {2(q-1)\choose q-1}.
\end{eqnarray*}

\section{Generating functions for terms}
\label{sec:gener-funct-terms}

We consider the vertical generating functions which gives the $T_{n,m}$'s for each value
of $m$.





\subsection*{Vertical generating functions}
\label{sec:vert-gener-funct}

We see that 
\[T_{n,m+1} = T_{n+1,m} - \sum_{k=0}^n T_{n-k,m} T_{k,m}.\]

Hence
\begin{eqnarray*}
  T^{\langle m\rangle} (0) &=& 0
\end{eqnarray*}
and
\begin{eqnarray*}
  T^{\langle m+1\rangle} (z) &=& \sum_{n=0}^{\infty} T_{n,m+1}  z^n\\
&=& \sum_{n=0}^{\infty} T_{n+1,m}  z^n - \sum_{n=0}^{\infty} \sum_{k=0}^n T_{n-k,m} T_{k,m} z^n\\
&=& \frac{T^{\langle m\rangle}(z)}{z} - (T^{\langle m\rangle}(z))^2.
\end{eqnarray*}
In other words
\[ z (T^{\langle m\rangle}(z))^2 - T^{\langle m\rangle}(z) + z T^{\langle m+1\rangle}(z) =0
.\]
Hence 
\[T^{\langle m\rangle}(z) = \frac{1 - \sqrt{1-4z^2T^{\langle m+1\rangle}(z)}}{2z}.\]
Moreover
\[[z]T^{\langle m\rangle}(z) = \frac{d \,T^{\langle m\rangle}}{d\,z}(0) = m.\]

We see that $T^{\langle m\rangle}$ is defined from $T^{\langle m+1\rangle}$.  Like the
bivariate generating function $T(z,u)$, $T^{\langle m\rangle}(z)$  is also difficult to
study, because we have $T^{\langle m\rangle}$ defined in term of 
$T^{\langle m+1\rangle}$.

\section{Conclusion}
\label{sec:conclusion}

We have given several parameters on numbers of untyped lambda terms and untyped normal
forms and proved or conjectured facts about them.  On another direction, it could be worth
to study typed lambda terms, whereas we have only analyzed untyped lambda terms in this
paper.

\nocite{bruijn58:_asymp_method_analy}

\begin{landscape}
  \begin{figure}[tb]
    \begin{center}
      \begin{math}
        \begin{array}[h]{r || r | r | r | r | r | r| r| r| r}
          n\backslash m& 0& 1 & 2 & 3 & 4 & 5 & 6 &  7 & \\
          \hline \hline
          1 & 0 & 1 & 2 & 3 & 4 & 5 & 6 & 7\\
          \hline
          2 & 1&2 & 3 & 4 & 5 & 6 & 7 & 8 \\
          \hline
          3 & 2& 4 & 8  & 14 & 22 & 32 & 44 & 58   \\
          \hline
          4 & 4& 12 & 26 & 46 & 72 & 104 & 142 & 186 \\
          \hline
          5 & 13& 38 & 87 & 172 & 305 & 498 &  763 & 1112 \\
          \hline
          6 & 42 & 127 & 324&  693 &  1294  & 2187 & 3432 &5089\\
          \hline
          7 & 139& 464& 1261& 2890 & 5831& 10684&18169 &29126\\
          \hline
          8 & 506& 1763& 5124& 12653 & 27254 & 52671 & 93488 &155129\\
          \hline
          9 & 1915& 7008& 21709& 57070& 130863& 269260& 508513& 896634\\
          \hline
          10 &   7558& 29019& 94840& 265129& 646458& 1406983& 2791564& 5136885 \\
          \hline
          11 & 31092& 124112& 427302& 1264362& 3262352& 7502892& 15703602 &30429782\\
          \hline
          12 & 132170& 548264& 1977908& 6168242& 16811366& 40776020& 89671904& 181746638\\
          \hline
          13 & 580466& 2491977& 9384672& 30755015& 88253310& 225197061& 520076012& 1104714147\\
          \hline
          14 & 2624545 & 11629836 & 45585471 & 156409882 & 471315501 & 1263116040 &3058077451
          & 6789961206 \\
          \hline
          15 & 12190623 &  55647539 &  226272369 &  810506769 &  2558249963 & 
          7184911623 &  18208806189 &  42244969589 \\
          \hline
          16 & 58083923 &  272486289 &  1146515237 &  4275219191 &  14098296495 &  41417170373
          &109721440529 &  265618096347 \\
          \hline
          17  & 283346273& 1363838742& 5923639803& 22933607180& 78832280277&
          241776779298& 668513708207& 1686996660888\\
          \hline
          18 & 1413449148& 6968881025& 31177380822& 125027527671& 
          446961983408&1428444131853&4116538065930& 10816530842627\\
          \hline
        \end{array}
      \end{math}
      \caption{Values of $T_{n,m}$ up to $(18,7)$}\label{fig:T}
    \end{center} 
\end{figure}
\end{landscape}

\begin{landscape}
  \begin{figure}[h]
    \centering
   \begin{math}
    \begin{array}[h]{ r | c | }
      n & P^T_n(m)\\
      \hline\hline
      1 & m \\
      \hline
      2 & m +1 \\
      \hline
      3 & m ^2 + m +2 \\
      \hline
      4 &3 m^{2}+5 m+4  \\
      \hline
      5 & 2m^3 + 6 m^2 + 17m +13\\
      \hline
      6 & 10m^3+ 26 m^2 + 49 m + 42\\
      \hline
      7 & 5 m^4 + 30 m^3 + 111 m^2 + 179m +139\\
      \hline
      8  & 35  m^4 + 134 m^3 + 405 m^2 + 683 m + 506\\
      \hline
      9 & 14 m^{5}+140 m^{4}+652 m^{3}+1658 m^{2}+2629 m+1915\\
      \hline
      10 & 126 m^{5}+676 m^{4}+2812 m^{3}+7122 m^{2}+10725 m+7558\\
      \hline
      11& 42 m^{6}+630 m^{5}+3610 m^{4}+12760 m^{3}+30783 m^{2}+45195 m+31092\\
      \hline
      12 & 462 m^{6}+3334 m^{5}+17670 m^{4}+60240 m^{3}+138033 m^{2}+196355 m+132170\\
      \hline
      13 & 132 m^{7}+2772 m^{6}+19218 m^{5}+87850 m^{4}+285982 m^{3}+635178 m^{2}+880379 m+580466\\
      \hline
      14 & 1716 m^{7}+16108 m^{6}+104034 m^{5}+449290 m^{4}+1390246 m^{3}+2991438 m^ {2}+4052459
      m+2624545\\
      \hline
      15 & 429 m^{8}+12012 m^{7}+99386 m^{6}+560854 m^{5}+2308173 m^{4}+6895122 m^{3}+14436365 m^{2}+19144575 m+12190623\\
      \hline
      16 & 6435 m^{8}+76444 m^{7}+584878 m^{6}+3076878 m^{5}+12039895 m^{4}+34815210
      m^{3}+71170791 m^{2}+92631835 m+58083923\\
      \hline
      17 & 1430 m^{9}+51480 m^{8}+502384 m^{7}+3389148 m^{6}+16925916 m^{5}+63753310 m^{4}+179178860 m^{3}+358339416 m^{2}+458350525 m+283346273\\
      \hline
      18 & 24310 m^{9}+357256 m^{8}+3176112 m^{7}+19799164 m^{6}+93981244 m^{5}+342274990 m^{4}+938333964 m^{3}+1840448776 m^{2}+2317036061 m+1413449148
    \end{array}
  \end{math}
    \caption{The polynomials $P^T_n$ for the function $m"|->" T_{n,m}$}\label{fig:Pn}
    \label{fig:poly}
  \end{figure}
\end{landscape}

\begin{landscape}
  \begin{figure}
    \centering
    \begin{math}
      \begin{array}[h]{r || r | r | r| r | r | r| r | r | r| r }
           n\backslash m& 0& 1 & 2 & 3 & 4 & 5 & 6 &  7 & 8 &\\
          \hline \hline
                1 &0 & 1 & 2 & 3 & 4 & 5 & 6 & 7 & 8 \\
\hline
        2 &1 & 2 & 3 & 4 & 5 & 6 & 7 & 8 & 9 \\
\hline
        3 &2 & 4 & 8 & 14 & 22 & 32 & 44 & 58 & 74 \\
\hline
        4 &4 & 10 & 20 & 34 & 52 & 74 & 100 & 130 & 164 \\
\hline
        5 &10 & 25 & 58 & 121 & 226 & 385 & 610 & 913 & 1306 \\
\hline
        6 &25 & 72 & 185 & 400 & 753 & 1280 & 2017 & 3000 & 4265 \\
\hline
        7 &72 & 223 & 614 & 1497 & 3244 & 6347 & 11418 & 19189 & 30512 \\
\hline
        8 &223 & 728 & 2195 & 5716 & 12863 & 25688 & 46723 & 78980 & 125951 \\
\hline
        9 &728 & 2549 & 8108 & 22745 & 56360 & 125093 & 253004 & 473753 & 832280 \\
\hline
        10 &2549 & 9254 & 31253 & 93734 & 244997 & 564854 & 1173029 & 2237558 & 3983189 \\
\hline
        11 & 9254 & 35168 & 124778 & 395720 & 1109222 & 2770904 & 6261818 & 12999728 & 25130630 \\
\hline
        12 & 35168 & 138606 & 512898 & 1720040 & 5097660 & 13347978 & 31308206 & 66902388 &
        132274680 \\
\hline
        13 & 138606 & 563907 & 2174894 & 7645095 & 23948550 & 66818531 &
        167837142 & 384821079 & 816168830 \\
\hline
        14 & 563907 & 2369982 & 9459993 & 34771380 & 114618495 & 335857722 &
        880524117 & 2092596528 & 4571548155 \\
\hline
        15 & 2369982 & 10231830 & 42221886 & 161568762 & 558056526 & 1723895502 &
        4785906510 & 12073186866 & 28016723742 \\
        \hline
        16 & 10231830 & 45381558 & 192944940 & 765787548 & 2764390146 &
        8947158690 & 25962816408 & 68135021640 & 163627733358 \\
        \hline
17 & 45381558 & 206266797 & 901441688 & 3701763855 & 13912595562 &
 47127027713 & 143678500332 & 397091138883 & 1005324501470\\
\hline
18 & 206266797& 959283300& 4302919895& 18223902654& 71123969121&
 251343711032& 799893538635& 2302171013970& 6046781201429\\
\hline
     \end{array}
    \end{math}
    \caption{Values of $F_{n,m}$ up to $(18,8)$}
    \label{fig:F}
  \end{figure}
\end{landscape}

\begin{landscape}
  \begin{figure}[h]
    \centering
   \begin{math}
    \begin{array}[h]{ r | c | }
      n & P_n^{NF}(m)\\
      \hline\hline
      1 & m \\
      \hline
      2 & m +1 \\
      \hline
      3 & m ^2 + m +2 \\
      \hline
4 & 2 m^{2}+4 m+4\\
\hline
5 & 2 m^{3}+3 m^{2}+10 m+10 \\
\hline
6 & 6 m^{3}+15 m^{2}+26 m+25\\
\hline
7 & 5 m^{4}+12 m^{3}+49 m^{2}+85 m+72\\
\hline
8 & 20 m^{4}+62 m^{3}+155 m^{2}+268 m+223\\
\hline
9 & 14 m^{5}+50 m^{4}+240 m^{3}+589 m^{2}+928 m+728\\
\hline
10 & 70 m^{5}+263 m^{4}+870 m^{3}+2146 m^{2}+3356 m+2549\\
\hline
11 & 42 m^{6}+210 m^{5}+1153 m^{4}+3658 m^{3}+8351 m^{2}+12500
m+9254\\
\hline
12 & 252 m^{6}+1128 m^{5}+4658 m^{4}+14838 m^{3}+33575 m^{2}+48987
m+35168 \\
\hline
13 & 132 m^{7}+882 m^{6}+5446 m^{5}+21198 m^{4}+63138 m^{3}+137695
m^{2}+196810 m+138606 \\
\hline
   14 & 924 m^{7}+4862 m^{6}+24086 m^{5}+93748 m^{4}+275898 m^{3}+587814
m^{2}+818743 m+563907\\
\hline
15 & 429 m^{8}+3696 m^{7}+25372 m^{6}+117120 m^{5}+429435
m^{4}+1223102 m^{3}+2558090 m^{2}+3504604 m+2369982\\
\hline
16 & 3432 m^{8}+20996 m^{7}+121286 m^{6}+556920 m^{5}+2011411 m^{4}+5601948 m^{3}+11448828 m^{2}+15384907 m+10231830\\
\hline
17 & 1430 m^{9}+15444 m^{8}+116892 m^{7}+624768 m^{6}+2717670
m^{5}+9524196 m^{4}+26064412 m^{3}+52459126 m^{2}+69361301
m+45381558\\
\hline
18 & 12870 m^{9}+90683 m^{8}+598120 m^{7}+3162562 m^{6}+13513606
m^{5}+46329205 m^{4}+124109404 m^{3}+245453736 m^{2}+319746317
m+206266797\\
\hline
    \end{array}
  \end{math}
    \caption{The polynomials $P^{NF}_n$ for the function $m"|->" F_{n,m}$}\label{fig:PFn}
    \label{fig:polyF}
  \end{figure}
\end{landscape}
\end{document}

\section{Bounding $T_{n,0}$}
\label{sec:bounding-t_0-m}

One sees that $T_{n,m+1} > T_{n,m}$, by induction on $n$.
Indeed, clearly $T_{1,m+1}  > T_{1,m}$. 
If $n>1$, then 
\begin{eqnarray*}
T_{n+1,m+1} &=& T_{n,m+2} + \sum_{k=1}^nT_{n-k,m+1}.T_{k,m+1} \\
&>& T_{n,m+1} + \sum_{k=1}^{n-1}T_{n-k,m}.T_{k,m}  \\
&= &T_{n+1,m}
\end{eqnarray*}
where the middle inequality comes by induction.  In fact, this is a formal proof of the
trivial fact that there are more terms with at most $m+1$ free variables than terms with at
mot $m$ free variables.

Since $T_{n,m+1} > T_{n,m}$, then $T_{n,m}> U_{n,m}$ where
\begin{eqnarray*}
U_{n+1,m} &=& U_{n,m} + \sum_{k=0}^{n} U_{n-k,m} U_{k,m}\\
U_{1,m} &=& m \\
\end{eqnarray*}

Consider the sequence $U_{n,1}$ and its associated generating function $\Uu(z)$ which
satisfies the equation:
\[\Uu(z) - 1 = z \,\Uu(z) + z\, \Uu^2(z)\]
Therefore
\[\Uu(z) = \frac{1-z\sqrt{(1-z)^2-4z^2}}{2z}\]
which has two singularities $-1$ and $\frac{1}{3}$, of which the smallest is
$\frac{1}{3}$.  We conclude (see ~\cite{flajolet08:_analy_combin} \S~IV.3) that
$U_{n,1}\bowtie 3^n$, \ie $U_{n,1}$ is of exponential order $3^n$ which means that
$U_{n,1}$ behaves like the sequence $3^n$.  Since $T_{n+1,0} > T_{n,1} > U_{n,1}$ we
conclude that $T_{n+1,0} > U_{n,1}\bowtie 3^n$.

On the other hand, we know that
\[T_{n+1,0} = T_{n,1} + \sum_{k=1}^{n} T_{n-k,0} . T_{k,0}.\]
Since $T_{n,0} < T_{n,1}$, then $T_{n,0} < C_n$ where
\[C_{n+1} = C_n + \sum_{k=1}^n C_{n-k} C_k.\]
with
\begin{eqnarray*}
  C_0 &=& 1\\
C_1 &=& 1.
\end{eqnarray*}
Hence 
\[C_{n+1} = \sum_{k=0}^n C_{n-k} C_k\]
which are equations that characterizes \emph{Catalan numbers} and we know that $C_n =
O(4^n)$.  Hence,
\[U_{n-1,1} < T_{n,0} <C_n \qquad \textrm{where~} U_{n-1,1} \bowtie 3^n \textrm{~and~} C_n=
O(4^n)\]

\subsubsection*{Bounding $F_{n,0}$}
\label{sec:bounding-f_n-m}
Here also we notice that ${F_{n,m+1} > F_{n,m}}$ and we conclude that the coefficients of the
generating functions $G_1(z)$ and $F_1(z)$ are bounded by those of the generating
functions:
\begin{eqnarray*}
  B_{1}(z) &=& z + z \, B_1(z) \,A_1(z)\\
  A_1(z) &=& z\, A_1(z) + B_1(z).
\end{eqnarray*}
from which we get 
\begin{displaymath}
 B_1(z) \quad =  \quad \frac{z}{1- z\, A_1(z)}
\end{displaymath}
and
\begin{displaymath}
  z(1-z) \,A_1^2(z) - (1-z)\,A_1(z) + z = 0.
\end{displaymath}
from which we get 
\begin{displaymath}
   A_1(z) =  \frac{(1-z) - \sqrt{(1-z)^2 - 4z^2(1-z)}}{2z(1-z)}.
\end{displaymath}
Its smallest singularity is for $(1-z)^2 - 4z^2(1-z) = 0$, namely 
\begin{displaymath}
r = \frac{-1 + \sqrt{17}}{8}
\end{displaymath}
hence $\frac{1}{r} = \frac{8}{-1 + \sqrt{17}} \sim  2.561553... $, therefore
\begin{displaymath}
  F_{n,0} < A_{n,1} = O\left(\left(\frac{8}{-1 + \sqrt{17}}\right)^n\right).
\end{displaymath}
We get
\begin{prop}
  Asymptotically almost all closed terms are reducible.
\end{prop}
\begin{proof}{}
  We have proved that $T_{n,0}\bowtie 3^n$ \ie  $T_{n,0}$ is of exponential order $3^n$
  and $F_{n,0}$ is in
  $O((8/(-1+\sqrt{17}))^n$ and clearly $3>{8/(-1+\sqrt{17})}$ (because
  $3=9/3>8/3=8/(-1+\sqrt{16})>8/(-1+\sqrt{17})$, which shows that asymptotically almost no
  closed term is a normal form, hence almost all closed terms are reducible.
\end{proof}
The above results  mean that the leading coefficients of $P^T_n$ and $P^{NF}_n$
have asymptotically the same behavior, so this shows the following fact:  when the number
$m$ of free variables grows, $P^T_{2q+1} (m)\sim P^{NF}_{2q+1}(m)$ and since
$P^{T}_{2q-1}< P^{T}_{2q} <P^{T}_{2q+1}$ and $P^{NF}_{2q-1}< P^{NF}_{2q} <P^{NF}_{2q+1}$
then, when $m$ grows, the set of normal forms tends to be the same as the sets of terms and
the set of normal forms tends to be no more negligible.

\subsection*{A conjecture}
\label{sec:conjecture-1}
Recall what we have computed:
\begin{displaymath}
\begin{array}{|l|l|l|l|l|}
\hline\hline
\textsl{coefficients}&\multicolumn{2}{c|}{\textsl{generating functions}}& \textsl{values} & \textsl{equivalents}\\
\hline\hline
P^T_{2q+1,q+1}&\Od(z)& \frac{1-\sqrt{1-4z}}{2z}&C_q& 4^q\sqrt{\frac{1}{\pi q^3}}\\ \hline
P^T_{2q+1,q}&\Sod(z) & \frac{z\sqrt{1-4z}}{(1-4z)^2}&
(2q-1) {2(q-1)\choose q-1}& 4^q\;\frac{1}{2}\;\sqrt{\frac{q}{`p}}\\ 
\hline
P^T_{2q+1,q-1} & \Tod(z)& %
\begin{array}{l}
  \frac{2z}{(1-4z)^2} +\\[2pt]
  \frac{z^2+z^3}{(1-4z)^3\sqrt{1-4z}}
\end{array}
&
\begin{array}{l}
q\;2^{2q-1} +\frac{q(q-1)(q-2)}{120}{2q\choose q}+\\[2pt]
\frac{(q+1)q(q-1)}{120}{2(q+1) \choose q+1}\\[2pt]
\end{array}
& 4^q\frac{8}{15}\;\sqrt{\frac{q^5}{`p}}
\\
\hline\hline
P^T_{2q,q}&\Ev(z) &  \frac{4z-1 + \sqrt{1-4z}}{2(1-4z)} & { 2q-1 \choose
  q}&4^q\;\sqrt{\frac{1}{4`p q}}\\[2pt]  \hline
P^T_{2q,q-1}&\Sev(z) &
\begin{array}{l}
\frac{z}{1-4z} +\\ \frac{z^2}{(1-4z)^2\sqrt{1-4z}}
\end{array}
& \begin{array}{l}4^{q-1} +\\ \frac{2 (2q - 5)  (2q - 3)  (2q
  -1)}{3(q-2)}  {2(q-3) \choose q-3}
\end{array}
& 4^q\;\frac{1}{12}\;\sqrt{\frac{q^3}{`p}}\\\hline
\end{array}
\end{displaymath}
This allows us make the following conjecture about the asymptotical evaluation of the size
of closed $`l$-terms:
\begin{conj}
Let $k_1$ and $k_0$ be two constants.
When $q$ goes to $\infty$,
  \begin{displaymath}
    T_{2q+1,0} \sim \frac{(2q+1)^{2q+1}}{\sqrt{q}} k_1\qquad \qquad \qquad
    T_{2q,0} \sim  \frac{(2q)^{2q}}{\sqrt{q}}k_0
  \end{displaymath}
\end{conj}
Indeed one sees that for $2q+1$ and for $i=-1,0,1, ...$, the sequence of the last column
is made of two sequences $\sqrt{q^{4i+1}} 2^{2q+1}$ and $1/(2\sqrt{`p}), 1/(4\sqrt{`p}),
4/(15\sqrt{`p}), ...$, that are multiplied element-wise.  We notice that when $i$ is $q$
the sequence $\sqrt{q^{4i+1}} 2^{2q+1}$ is
$(2q)^{2q+1}/\sqrt{q}\sim(2q+1)^{2q+1}/\sqrt{q}$.  On another hand, we claim that the
sequence $1/(2\sqrt{`p}), 1/(4\sqrt{`p}), 4/(15\sqrt{`p}), ...$ tends to a limit~$k_1$
when $i$ goes to infinity (then $q$ as well).  Therefore $P^T_{2q+1,0} = T_{2q+1,0} \sim k_1
(2q+1)^{2q+1}/\sqrt{q}$

Similarly, $P^T_{2q,q-i} \sim \sqrt{q^{4i-1}}4^q\,f(i)$ where $f(i)$ depends on $i$.  We
claim that $T_{2q,0}=P^T_{2q,0}\sim \sqrt{q^{4q-1}} 4^q\,k_0 = k_0 (2q)^{2q}/\sqrt{q}$ where $k_0 =
\raisebox{-.5ex}{$\stackrel{\normalsize \lim}{\scriptscriptstyle i"->"\infty}$}f_0(i)$ (we
claim again that this limit exists).

